\documentclass[12pt, draftclsnofoot, onecolumn]{IEEEtran}

\usepackage{graphics} 
\usepackage{amssymb}  
\usepackage{cite}
\usepackage{amsmath}
\usepackage{algorithm}
\usepackage{algorithmic}
\usepackage{graphicx}
\usepackage{epstopdf}
\usepackage{booktabs}
\usepackage{multirow}
\usepackage{etoolbox}
\usepackage{mathdots}
\usepackage{xcolor}
\makeatletter
\patchcmd{\@makecaption}
{\scshape}
{}
{}
{}
\makeatother
\usepackage{diagbox}
\usepackage{bm}
\usepackage{subfigure}
\usepackage{lipsum}
\usepackage{braket,amsfonts,amsopn} 
\usepackage{mathrsfs,enumerate}
\usepackage{url}
\usepackage{hyperref}

\def\norm #1{\left\|#1\right\|}

\def\frobn #1{\left\|#1\right\|_{\text{F}}}

\def\abs #1{\left|#1\right|}

\def\st{\text{subject to }}

\def\bC{\mathbb{C}}

\def\bR{\mathbb{R}}

\def\bS{\mathbb{S}}

\def\m #1{\boldsymbol{#1}}

\def\cA{\mathcal{A}}

\def\cH{\mathcal{H}}

\def\cL{\mathcal{L}}
\def\cM{\mathcal{M}}

\def\cP{\mathcal{P}}

\def\cS{\mathcal{S}}
\def\cT{\mathcal{T}}

\def\bee{\begin{equation}}
	\def\ene{\end{equation}}

\def\beq{\begin{eqnarray}}
	\def\enq{\end{eqnarray}}
\def\lentwo{\setlength\arraycolsep{2pt}}

\newtheorem{thm}{Theorem}
\newtheorem{prop}{Proposition}

\newenvironment{proof}{{\noindent\it \quad Proof:}\;}{\hfill $\square$\par}

\def\equ #1{\begin{equation}#1\end{equation}}
\def\equa #1{\begin{eqnarray}#1\end{eqnarray}}
\def\sbra #1{\left(#1\right)}
\def\mbra #1{\left[#1\right]}
\def\lbra #1{\left\{#1\right\}}
\def\diag #1{\text{diag}#1}
\def\tr #1{\text{tr}#1}

\def\rank #1{\text{rank}#1}
\def\st {\text{ subject to }}

\DeclareMathOperator*{\argmin}{arg\,min}


\title{Direction-of-Arrival Estimation for Constant Modulus Signals Using a Structured Matrix Recovery Technique}
\author{Xunmeng Wu, Zai Yang, Zhiqiang Wei, and Zongben Xu
	\thanks{
		The research of the project was supported by the National Natural Science Foundation of China under Grants 61977053 and 11922116.
		
		The authors are with the School of Mathematics and Statistics, Xi'an Jiaotong University, Xi'an 710049, China (e-mails: wxm1996@stu.xjtu.edu.cn, yangzai@xjtu.edu.cn, zhiqiang.wei@xjtu.edu.cn, zbxu@xjtu.edu.cn).
		
		Author for correspondence: Zai Yang.
	}
}

\begin{document}\maketitle
	
	
	\begin{abstract} 
		This paper addresses the problem of direction-of-arrival (DOA) estimation for constant modulus (CM) source signals using a uniform or sparse linear array. Existing methods typically exploit either the Vandermonde structure of the steering matrix or the CM structure of source signals only. In this paper, we propose a \emph{s}tructured \emph{ma}trix \emph{r}ecovery \emph{t}echnique (SMART) for CM DOA estimation via fully exploiting the two structures. In particular, we reformulate the highly nonconvex CM DOA estimation problems in the noiseless and noisy cases as equivalent rank-constrained Hankel-Toeplitz matrix recovery problems, in which the Vandermonde structure is captured by a series of Hankel-Toeplitz block matrices, of which the number equals the number of snapshots, and the CM structure is guaranteed by letting the block matrices share a same Toeplitz submatrix. The alternating direction method of multipliers (ADMM) is applied to solve the resulting rank-constrained problems and the DOAs are uniquely retrieved from the numerical solution. Extensive simulations are carried out to corroborate our analysis and confirm that the proposed SMART outperforms state-of-the-art algorithms in terms of the maximum number of locatable sources and statistical efficiency.
	\end{abstract}
	
	\begin{IEEEkeywords}
		DOA estimation, constant modulus, structured matrix recovery technique (SMART), rank-constrained problem, nonconvex optimization. 
	\end{IEEEkeywords}
	
	\section{Introduction}
	Direction-of-arrival (DOA) estimation, which refers to the process of estimating the DOAs of multiple farfield narrowband source signals from observed snapshots of a sensor array, is a fundamental problem in array signal processing and has numerous applications in sensing systems such as radar \cite{zheng2017adaptive} and sonar \cite{knight1981digital}, wireless communications \cite{guo2017millimeter,tsai2018millimeter,shafin2016doa,wei2018multi,wei2019noma}, and integrated sensing and communications (ISAC) \cite{zhang2021overview,liu2022survey}. 
	For many communication signals, such as those based on phase modulation (PM) and frequency modulation (FM) in the analog domain, and phase shift keying (PSK) and frequency shift keying (FSK) for digital signals, constant modulus (CM) is an appealing property that enables a high amplification efficiency and a large coverage area \cite{mancuso2011reducing,liu2018toward}.
	CM signals have played crucial roles in many tasks for communications such as blind equalization \cite{godard1980self,treichler1983new,mariere2003blind}, blind beamforming \cite{treichler1985new,van1996analytical}, intelligent reflect surface design \cite{wu2019intelligent,wei2020sum,he2021novel}, waveform design in ISAC \cite{liu2018toward,hassanien2019dual}, etc. For instance, considering an ISAC system in which communications and radar sensing are simultaneously carried out, CM transmitted signals can not only lead to an energy-efficient transmission but also enhance the sensing performance \cite{zhang2021overview,liu2022survey}. This motivates us to be concerned about the DOA estimation problem for CM source signals using a uniform or sparse linear array (ULA or SLA) in this paper, which is referred to as the CM DOA estimation problem. 
	
	Exploiting the CM structure of source signals brings great advantages to DOA estimation in terms of the maximum number of locatable sources and the Cram\'{e}r-Rao bound (CRB) \cite{wax1992unique,williams1992resolving,valaee1994alternative,leshem1999direction}. In particular, necessary and sufficient conditions for unique localization of CM signal sources using a ULA have been analyzed in the absence of noise in \cite{wax1992unique,williams1992resolving,valaee1994alternative}. It was shown that the maximum number of CM sources that can be uniquely localized can exceed the number of sensors. This result breaks through the conventional limit where the number of sources must be less than the number of sensors for arbitrary source signals \cite{pisarenko1973retrieval,bresler1986number}. Moreover, it was shown in \cite{leshem1999direction} that the CRB of DOA estimation for CM signals is significantly smaller than that for arbitrary signals in challenging scenarios with limited snapshots or closely located sources. Hence, it is of great interest to develop algorithms via utilizing the CM structure to improve the DOA estimation performance. 
	
	The CM DOA estimation problem is highly nonconvex and NP-hard, in which the high nonlinearity and the nonconvexity arise both from the Vandermonde structure of the steering matrix for ULAs and the CM structure of source signals. Taking account of the CM structure into DOA estimation dates back to \cite{gooch1986cm,shynk1996constant} and has been studied in the past four decades. 
	Two-step CM DOA estimation methods have been proposed in \cite{leshem1999direction,shynk1996constant,van2001asymptotic}, where the iterative CM algorithm (CMA) \cite{godard1980self,treichler1983new,treichler1985new}, the analytic CM algorithm (ACMA) \cite{van1996analytical}, or the zero-forcing variant of ACMA (ZF-ACMA) \cite{van2001asymptotic} is used to estimate the steering matrix by exploiting the CM structure, from which the DOAs are estimated.
	All of these two-step methods are suboptimal because the Vandermonde structure is not used to solve for the steering matrix in the first step. Subsequently, Leshem \cite{leshem2000maximum} has derived a Newton scoring algorithm for maximum likelihood estimation (MLE) that work directly in the parameter domain. This local optimization algorithm approximately solves a nonlinear least-squares (NLS) problem of the MLE and its performance heavily depends on the initialization. Besides, Stoica et al. \cite{stoica2000maximum} have derived a simple expression for the MLE in the single source case. But to the best of our knowledge, few substantial improvements have been further made in the past two decades. In this paper, we aim at fully using both the Vandermonde and CM structures to substantially improve the performance of CM DOA estimation.
	
	Note that the CM structure makes sense only in the presence of multiple snapshots. In the single-snapshot case, the CM DOA estimation problem concerned in this paper is nothing but the general line spectral estimation problem \cite{stoica2005spectral,tang2013compressed,wu2022maximum} in which the exploitation of the Vandermonde structure has been extensively studied. This motivates us to extend existing algorithms for line spectral estimation from the single-snapshot to the multiple-snapshot case by preserving the CM structure. However, such an extension is challenging since the CM structure of source signals in our problem results in a large number of nonconvex constraints, as mentioned previously, while previous studies \cite{ottersten1993exact,krim1996two,stoica2005spectral,yang2018sparse,starer1992newton,feder1988parameter,stoica1990maximum,schmidt1986multiple,barabell1983improving,roy1989esprit,malioutov2005sparse,hu2013doa,stoica2010spice,yang2014discretization,wu2017toeplitz,yang2018sample,yang2022robust} on the multiple-snapshot DOA estimation have not considered the CM structure.
	
	This paper is inspired by our recent work \cite{wu2022maximum} on line spectral estimation (or single-snapshot DOA estimation). In \cite{wu2022maximum}, a rank-constrained Hankel-Toeplitz block matrix recovery approach was proposed where the matrix rank equals the number of sources and the DOAs and amplitudes (or moduli) of all sources are fully encoded in the Toeplitz submatrix. To extend \cite{wu2022maximum} to the multiple-snapshot CM DOA estimation concerned in the present paper, we propose to form a Hankel-Toeplitz block matrix with respect to each snapshot and let all the block matrices share a same Toeplitz submatrix so that the same DOAs and associated moduli are imposed among all snapshots. Consequently, the CM DOA estimation problem is formulated as a rank-constrained structured matrix recovery problem either in the noiseless or noisy case in which the only nonconvexity arises due to the rank constraints on the Hankel-Toeplitz block matrices and the Toeplitz submatrix. We show that the rank constraints on the Toeplitz submatrix can be removed with good performance and the resulting rank-constrained problems can be solved efficiently using the alternating direction method of multipliers (ADMM) in which both subproblems are solved in closed form. The proposed approach is suitable for both ULA and SLA cases and is termed as structured matrix recovery technique (SMART). Extensive numerical results are provided to verify our theoretical findings and demonstrate the advantages of the proposed SMART in terms of the number of locatable sources and accuracy.
	
	\subsection{Related Work}
	The studies on the CM structure date back to CMA \cite{godard1980self,treichler1983new,treichler1985new} for blind equalization and beamforming in the 1980s, where a nonconvex modulus-error cost function is minimized using stochastic gradient descent. 
	To circumvent the convergence issue of CMA, ACMA and ZF-ACMA were proposed in \cite{van1996analytical,van2001asymptotic} using standard linear algebra tools for blind source separation. A ZF-ACMA-based two-step method was proposed in \cite{amar2010low} for blindly estimating polynomial phase signals observed by a sensor array of which the structure is unknown. 	A two-dimensional ACMA-based estimator was recently developed in \cite{hamici2019elements} to localize CM sources using two perpendicular Centro-symmetric arrays under array element failures. Both the problems in \cite{amar2010low,hamici2019elements} are different from CM DOA estimation concerned in this paper.
	
	Another line of research is to consider a CM-constrained nonconvex optimization problem in which the CM structure is incorporated into a constraint. In \cite{smith1999optimum,leshem2000maximum,wang2012design}, an explicit argument/phase parameterization of the CM constraint was employed, followed by derivative-based methods, such as gradient descent and Newton's method, to solve the resulting unconstrained nonconvex optimization problem. On the other hand, since the CM constraint is a special case of quadratic constraints, the semidefinite relaxation (SDR) technique \cite{luo2010semidefinite} was applied to give a convex relaxation of the CM-constrained problem by deriving an equivalent formulation and then dropping its nonconvex rank-one constraint. Moreover, nonconvex methods, including power method-like iterations \cite{soltanalian2014designing} and gradient projection algorithm \cite{tranter2017fast}, were introduced for local optimization of the special unit-modulus-constrained problem. Many other related techniques were also proposed to tackle the CM constraint in applications, such as relaxing the CM equality constraint into the inequality one \cite{aldayel2016successive,he2021novel}, introducing auxiliary variables in the ADMM framework \cite{fan2018constant}, and using the branch-and-bound algorithm \cite{liu2018toward}, etc. But these methods only focus on the CM structure and do not take the Vandermonde structure of the steering matrix for linear arrays into consideration.
	
	The Vandermonde structure for linear arrays is a key property in DOA estimation. During the past four decades, a great number of approaches have been developed to utilize this structure in DOA estimation for arbitrary source signals, see \cite{ottersten1993exact,krim1996two,stoica2005spectral,yang2018sparse}. By modeling the source signals as unknown deterministic quantities or Gaussian random processes, deterministic or stochastic MLE methods were developed in which the main difficulty comes from the nonlinearity and nonconvexity with respect to the DOAs. Several methods were applied to approximately solve MLE in the DOA parameter domain, such as Newton's method \cite{starer1992newton}, expectation maximization \cite{feder1988parameter}, and the method of direction estimation (MODE) \cite{stoica1990maximum}, etc. To circumvent the difficulty of MLE, subspace-based methods were proposed based on the eigenstructure of the array covariance matrix, including MUSIC \cite{schmidt1986multiple}, Root-MUSIC \cite{barabell1983improving}, and ESPRIT \cite{roy1989esprit,yang2022nonasymptotic} which exploit the Vandermonde structure by finding peaks of the spectrum, polynomial rooting, and using the so-called shift structure, respectively. 
	
	Sparse optimization and compressed sensing approaches have dominated the research of this century on DOA estimation, see \cite{yang2018sparse}. To overcome the aforementioned nonlinearity, the continuous DOA domain is approximated by a given set of grid points and the steering matrix turns to be a fixed Vandermonde dictionary matrix. Based on this idea, sparse signal recovery techniques such as $\ell_1$-norm minimization \cite{malioutov2005sparse,hu2013doa} and the semiparametric iterative covariance-based estimation (SPICE) method \cite{stoica2010spice} were proposed. Another advanced technique to tackle the nonlinearity and to exploit the Vandermonde structure is based on the Carath\'{e}odory-Fej\'{e}r theorem \cite[Theorem 11.5]{yang2018sparse} that says that a low-rank positive-semidefinite (PSD) Toeplitz matrix admits a unique Vandermonde decomposition in which the DOAs are given by the nodes of the Vandermonde matrix. Several representative methods include sparse and parametric approach (SPA) \cite{yang2014discretization}, covariance matrix reconstruction approach \cite{wu2017toeplitz}, and atomic norm minimization (ANM) \cite{tang2013compressed,yang2018sample}, which explicitly optimize
	a low-rank PSD Toeplitz matrix where the low-rankness comes from the fact that sources are usually less than sensors. But all these mainstream methods are designed for arbitrary source signals and show suboptimal performance when applied to DOA estimation for CM source signals.
	
	\subsection{Notation}
	Boldface letters are reserved for vectors and matrices. The sets of real and complex numbers are denoted by $\bR$ and $\bC$, respectively. For vector $\m{x}$, $\m{x}^T$, $\overline{\m{x}}$, $\m{x}^H$, and $\norm{\m{x}}_2$ denote its transpose, complex conjugate, conjugate transpose, and $\ell_2$ norm, respectively. For matrix $\m{X}$, the transpose, complex conjugate, conjugate transpose, inverse, Moore-Penrose pseudo-inverse, Frobenius norm, rank, column space and trace are denoted by $\m{X}^T$, $\overline{\m{X}}$, $\m{X}^H$, $\m{X}^{-1}$, $\m{X}^{\dag}$, $\frobn{\m{X}}$, $\rank\sbra{\m{X}}$, $\text{range}\sbra{\m{X}}$ and $\tr\sbra{\m{X}}$ respectively. The inner product of matrices $\m{X}$ and $\m{Y}$ is defined as $\langle \m{X},\m{Y} \rangle_{\bR} = \Re\lbra{\tr\sbra{\m{X}^H\m{Y}}}$. $\m{X}\geq \m{0}$ means that $\m{X}$ is Hermitian PSD. The notation $\abs{\cdot}$ denotes the modulus of a scalar or the cardinality of a set, and the diagonal matrix with vector $\m{x}$ on the diagonal is denoted by $\diag\sbra{\m{x}}$. The $j$th entry of vector $\m{x}$ is $x_j$, the $\sbra{i,j}$th entry of $\m{X}$ is $X_{i,j}$, and the $i$-th row (or $j$-th column) of $\m{X}$ is $\m{X}_{i,:}$ (or $\m{X}_{:,j}$). An identity matrix is denoted by $\m{I}$. 
	Denote $\cH$ as a Hankel operator that maps $\m{x} \in \bC^{N}$ to an $n \times n$ Hankel matrix $\cH\m{x}$ with its $\sbra{i,j}$ entry given by $x_{i+j-1}$ where $N=2n-1$. Denote $\cT$ as a Hermitian Toeplitz operator that maps $\m{t}\in\bC^n$ to an $n\times n$ Hermitian Toeplitz matrix $\cT\m{t}$ with its $\sbra{i,j},i\ge j$ entry given by $t_{i-j+1}$.
	
	\subsection{Organization} \label{sec:org}
	The rest of the paper is organized as follows. Section \ref{sec:data} presents the data model and prior art. Section \ref{sec:main} presents the proposed model of SMART. Section \ref{sec:alg} presents the ADMM algorithm of SMART. Section \ref{sec:extensions} extends SMART to the large number of sources case and the SLA case. Section \ref{sec:sim} provides numerical simulations and Section \ref{sec:conclusion} concludes this paper.
	
	\section{Data Model and Prior Art} \label{sec:data}
	\subsection{Data Model}
	Assume that $K$ narrowband farfield CM source signals impinge on a ULA of $N$ sensors from DOAs $\theta_k \in \left[-\frac{\pi}{2}, \frac{\pi}{2} \right), k=1,\ldots,K$. Let $L$ be the number of snapshots. At the $l$-th time instant, the array observation can be modeled as \cite{leshem1999direction}
	\equ{
		\begin{split}
			\m{y}\sbra{l} & = \sum^K_{k=1} \m{a}\sbra{\theta_k} b_k e^{i\phi_{k,l}} +\m{e}\sbra{l}, \quad l = 1,\ldots,L,
		\end{split}
		\label{eq:signalmodel}
	} 
	where $\m{y}\sbra{l}\in\mathbb{C}^N$, $\m{e}\sbra{l}\in\mathbb{C}^N$ is the noise vector, $b_k e^{i\phi_{k,l}}$ is the $k$-th CM source signal at time $l$ where $i=\sqrt{-1}$, $b_k>0$ is the unknown modulus and $\phi_{k,l}$ is the unknown phase. It is seen that for any $k$-th source, all snapshots share the same parameters $\theta_k$ and $b_k$. The steering vector of the $k$-th source is given by
	\equ{
		\m{a}\sbra{\theta_k} = \mbra{1,e^{i2\pi\sin\sbra{\theta_k}d/\lambda},\ldots,e^{i2\pi\sbra{N-1}\sin\sbra{\theta_k}d/\lambda} }^T,
	}
	where the intersensor spacing $d$ is taken as half the wavelength $\lambda$. 
	
	Denote the DOAs $\m{\theta}=\mbra{\theta_1,\ldots,\theta_K}^T$. The steering vectors compose the steering matrix $\m{A}(\m{\theta})=\mbra{\m{a}\sbra{\theta_1},\ldots,\m{a}\sbra{\theta_K}}$ that has the Vandermonde structure. We rewrite \eqref{eq:signalmodel} in matrix form as:
	\equ{
		\m{Y} = \m{A}(\m{\theta})\m{B}\m{\Phi} + \m{E} = \m{X}^\star + \m{E}, \label{eq:signal_matrix_CM}
	}
	where $\m{Y}=\mbra{\m{y}\sbra{1},\ldots,\m{y}\sbra{L}}$, $\m{E}=\mbra{\m{e}\sbra{1},\ldots,\m{e}\sbra{L}}$, $\m{B}=\diag\sbra{\m{b}}$ with $\m{b}=\mbra{b_1,\ldots,b_K}^T$, $\m{\Phi}\in\bC^{K\times L}$ with $\Phi_{k,l} = e^{i\phi_{k,l}}$ and $\m{X}^\star = \m{A}(\m{\theta})\m{B}\m{\Phi}$. 
	
	The data model for the case of a SLA is given by
	\equ{
		\m{Y}_{\Omega} = \cP_{\Omega}(\m{Y}) = \m{A}_{\Omega}\sbra{\m{\theta}} \m{B} \m{\Phi}+\m{E}_{\Omega} = \m{X}_{\Omega}^\star + \m{E}_{\Omega} , \label{eq:SLA_data}
	}
	where the subset $\Omega \subseteq \lbra{1,\ldots,N}$ of size $M \le N$ is the index set of a SLA, $\cP_{\Omega}$ is the projection onto the rows supported on $\Omega$ that sets all entries outside of $\Omega$ to zero, $\m{A}_{\Omega}(\m{\theta}) = \cP_{\Omega}(\m{A}(\m{\theta}))$ and $\m{E}_{\Omega} = \cP_{\Omega}(\m{E})$. Evidently, \eqref{eq:SLA_data} degenerates to \eqref{eq:signal_matrix_CM} for the ULA when $M=N$. 
	
	Given the array output $\m{Y}$ in \eqref{eq:signal_matrix_CM} (or $\m{Y}_{\Omega}$ in \eqref{eq:SLA_data}), we focus on estimating the DOAs $\m{\theta}$ by exploiting the CM structure of the source signals $\m{B}\m{\Phi}$ and the Vandermonde structure of the steering matrix $\m{A}(\m{\theta})$ in this paper. 
	The source number $K$ is assumed to be known and its determination, if it is unknown, is another fundamental problem \cite{stoica2004model}.
	
	\subsection{Existing Methods for CM DOA Estimation}
	Before introducing the proposed approach, we briefly discuss two existing CM DOA estimation methods \cite{leshem1999direction,leshem2000maximum,stoica2000maximum}:
	
	\subsubsection{ACMA-based method}
	For CM DOA estimation, the ACMA approach \cite{van1996analytical} has been applied to estimate the steering matrix $\m{A}(\m{\theta})$ in \cite{leshem1999direction}. In ACMA, the CM factorization problem was considered in the noiseless case:
	\equ{
		\text{find} \; \m{W}\in \bC^{K\times N}, \st \m{W}\m{X}^\star = \m{S}, \; \abs{S_{k,l}} = 1, \label{eq:ACMA1}
	}
	and the problem of minimizing the modulus-error cost function was considered in the noisy case:
	\equ{
		\min_{\m{w}\in \bC^{1\times N}} \sum^L_{l=1} \sbra{ \abs{\sbra{\m{w}\m{Y}}_l}^2-1 }^2, \label{eq:ACMA2}
	}
	where $\m{w}$ is the $k$-th row vector of $\m{W}$ and $\sbra{\cdot}_l$ is the $l$-th element of a vector.
	The problem in \eqref{eq:ACMA1} (or \eqref{eq:ACMA2}) was shown in \cite{van1996analytical} to be a generalized eigenvalue problem (or generalized Schur decomposition) and was solved by a simultaneous diagonalization of a set of matrices. The steering matrix $\m{A}(\m{\theta})$ was estimated as $\m{W}^{\dag}$ from which the DOAs were estimated via one-dimensional search or ESPRIT \cite{leshem1999direction}. Note that this ACMA-based DOA estimation method is suboptimal since the Vandermonde structure of $\m{A}(\m{\theta})$ is not used to solve for $\m{W}$.
	
	\subsubsection{Newton's method} \label{sec:NM}
	To improve the estimation performance, a Newton scoring algorithm has been derived for the MLE of CM DOA estimation in \cite{leshem2000maximum}. In particular, supposing that the DOAs $\m{\theta}$, moduli $\m{b}$ and phases $\m{\Phi}$ are deterministic but unknown parameters and assuming i.i.d. circular complex Gaussian noise, the MLE in the parameter domain is to tackle the NLS problem \cite{leshem2000maximum,stoica2000maximum}:
	\equ{
		\min_{\m{\theta},\m{b},\m{\Phi}} \frobn{\m{Y}-\m{A}(\m{\theta})\m{B}\m{\Phi}}^2. \label{eq:NLS_CM}
	}
	Due to the nonlinearity and the nonconvexity of the cost function in \eqref{eq:NLS_CM}, a Newton local optimization algorithm with respect to the parameters $\lbra{\m{\theta},\m{b},\m{\Phi}}$ has been devised to solve \eqref{eq:NLS_CM} and suboptimal methods including ESPRIT and ACMA were chosen for initialization. But we note that the performance of this algorithm heavily depends on the initialization.
	
	\section{Proposed CM DOA estimation Approach} \label{sec:main}
	In this section, we formulate the CM DOA estimation problems without or with noise using a ULA as equivalent rank-constrained Hankel-Toeplitz matrix recovery problems by making full use of the Vandermonde and CM structures.
	Without loss of generality, we assume that the number of sensors $N$ is odd such that $N=2n-1$ for integer $n$. If $N$ is even, then we may assume that the $\sbra{N+1}$st sensor fails to work, which can be regarded as the SLA case and will be tackled in Section \ref{sec:extensions}. Moreover, we assume that the number of sources is small, i.e., $K<n$ and the case of $K\ge n$ will also be tackled in Section \ref{sec:extensions}.
	
	\subsection{Problem Statement}
	Denote the set of CM spectrally sparse signals as
	\equ{
		\begin{split}
			\cS_0 =
			& \left\{ \m{A} \sbra{\m{\theta}}\m{B}\m{\Phi} \in \bC^{N\times L}: \;  \theta_k \in \left[-\frac{\pi}{2}, \frac{\pi}{2} \right), \m{B}=\diag\sbra{\m{b}}, 
			\right.\\
			& \phantom{\bC} \qquad \qquad b_k \ge 0, \Phi_{k,l} = e^{i\phi_{k,l}}, k=1,\ldots,K,  l = 1,\ldots,L \bigg\}.
		\end{split}  \label{eq:S0}
	}
	Note that the signal is called spectrally sparse since the number of sources $K$ is small.
	In the absence of noise, the CM DOA estimation problem is to 
	find a factorization satisfying the Vandermonde and the CM constraints, which is given by
	\equ{
		\begin{split}
			\text{find} \ \lbra{\m{\theta},\m{b},\m{\Phi} }, \st & \m{A}(\m{\theta})\m{B}\m{\Phi} = \m{X}^\star \in \cS_0. \label{eq:noiseless}
		\end{split}
	} 
	In the presence of noise, we consider the MLE and reformulate the NLS problem in \eqref{eq:NLS_CM} as the following recovery problem in the signal domain
	\equ{\min_{\m{X}} \frobn{\m{Y} - \m{X}}^2, \st \m{X}\in \cS_0. \label{eq:p_S0_CM}}
	
	It is challenging to handle $\m{X}\in \cS_0$ due to the high nonconvexity with respect to $\lbra{\theta_k, \Phi_{k,l}}$ in $\cS_0$. To overcome these difficulties and make the optimization problems in \eqref{eq:noiseless} and \eqref{eq:p_S0_CM} tractable, the key is to characterize the constraint $\m{X}\in \cS_0$ exactly and efficiently.
	
	\subsection{Proposed Structured Matrix Optimization Model} \label{sec:proposed}
	Let us first recall the single-snapshot case. Recall that $N=2n-1$ and we assume that $K<n$ in this section. A rank-constrained Hankel-Toeplitz matrix model has been proposed in our recent work \cite{wu2022maximum} to characterize the set $\cS_0$ with $L=1$, which is given by
	\equ{
		\begin{split}
			\cS^1_{\text{HT}} =
			\lbra{ \m{x}:\,\begin{bmatrix} \cT\overline{\m{t}} & \cH\overline{\m{x}} \\ \cH\m{x} & \cT\m{t} \end{bmatrix}\geq \m{0},\;\rank\begin{bmatrix} \cT\overline{\m{t}} & \cH\overline{\m{x}} \\ \cH\m{x} & \cT\m{t} \end{bmatrix}\leq K  \text{ for some } \m{t}\in\bC^ n }.
		\end{split} \label{eq:S_HT} 
	}
	Moreover, for a given $\m{x}=\sum^K_{k=1} \m{a}(\theta_k) b_k e^{i\phi_k} \in \cS_{\text{HT}}^1$ with nonzero $b_k$'s, there exists a unique $\m{t}$ in \eqref{eq:S_HT} satisfying
	\equ{
		\cT\m{t} = \m{A}_n(\m{\theta}) \m{B} \m{A}_n^H(\m{\theta}), \label{eq:T_Vander}
	} 
	where $\m{A}_{n}(\m{\theta})=\mbra{\m{a}_{n}\sbra{\theta_1},\ldots,\m{a}_{n}\sbra{\theta_K}} \in \bC^{n\times K}$ with $\m{a}_{n}\sbra{\theta_k} = \mbra{1,e^{i\pi\sin \theta_k},\ldots,e^{i\pi\sbra{n-1}\sin \theta_k} }^T$.
	This means that both the DOAs $\m{\theta}$ and moduli $\m{b}$ are fully preserved by the low-rank PSD Toeplitz submatrix $\cT\m{t}$ in the Hankel-Toeplitz block matrix.
	
	We then focus on the multi-snapshot case where all snapshots share the same DOAs and moduli due to the Vandermonde and CM structures. For each snapshot $l=1,\ldots,L$, we construct a low-rank PSD Hankel-Toeplitz matrix as in \eqref{eq:S_HT}. To impose the same DOAs and moduli among the snapshots, the same Toeplitz matrix is used in the Hankel-Toeplitz matrices. In particular, the Hankel-Toeplitz matrix for the $l$-th snapshot is given by
	\equ{
		\cM(\m{X}_{:,l},\m{t}) \triangleq \begin{bmatrix} \cT\overline{\m{t}} & \cH\overline{\m{X}_{:,l}} \\ \cH\m{X}_{:,l} & \cT\m{t} \end{bmatrix}, \label{eq:HT2}
	}
	where $\cH\m{X}_{:,l}$ and $\cT\m{t}$ are $n\times n$ Hankel and Hermitian Toeplitz matrices, respectively, and 
	\equ{
		\begin{split}
			\cS_{\text{HT}} = & \lbra{ {\m{X}:\rank\sbra{\cT\m{t}} = \rank\sbra{\cM(\m{X}_{:,l},\m{t})}\leq K,} \right. \\
				&  \left.  
				{ \qquad  \cM(\m{X}_{:,l},\m{t})\geq \m{0}, l=1,\ldots,L, \text{for some } \m{t} \in \bC^n } }. \label{eq:S_1} 
		\end{split}
	}
	Formally, we have the following theorem, which generalizes \cite[Theorem 1]{wu2022maximum} for the single-snapshot case.
	\begin{thm} \label{thm:K<n}
		If $K<n$, then we have the following conclusions:
		\begin{enumerate}
			\item $\cS_{\text{HT}} = \cS_0$;
			\item For any $\m{X} = \m{A} \sbra{\m{\theta}}\m{B}\m{\Phi} \in \cS_0$ with distinct $\lbra{\theta_k}$ and $\lbra{b_k > 0}$, $\cT\m{t}$ in \eqref{eq:S_1} is unique with
			\equ{
				\cT\m{t} = \m{A}_n(\m{\theta}) \m{B} \m{A}_n^H(\m{\theta}). \label{eq:Tt}
			}
		\end{enumerate}  
	\end{thm}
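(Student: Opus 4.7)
The plan is to reduce the claim to the single-snapshot rank-constrained Hankel-Toeplitz characterization established in \cite{wu2022maximum}, exploiting the fact that in $\cS_{\text{HT}}$ the \emph{same} Toeplitz submatrix $\cT\m{t}$ is shared across all snapshots. This sharing is precisely what transmits the common DOAs and moduli across snapshots while allowing snapshot-dependent phases, and it is the only new structural ingredient beyond the single-snapshot setting.

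For the forward inclusion $\cS_0\subseteq\cS_{\text{HT}}$, given $\m{X} = \m{A}\sbra{\m{\theta}}\m{B}\m{\Phi}\in\cS_0$, I would define $\m{t}$ by $\cT\m{t} = \m{A}_n\sbra{\m{\theta}}\m{B}\m{A}_n^H\sbra{\m{\theta}}$, which is Hermitian Toeplitz of rank at most $K$. For each snapshot $l$, a direct calculation gives $\cH\m{X}_{:,l} = \m{A}_n\sbra{\m{\theta}}\diag\sbra{b_k e^{i\phi_{k,l}}}\m{A}_n^T\sbra{\m{\theta}}$, and then $\cM\sbra{\m{X}_{:,l},\m{t}} = \m{V}_l\m{V}_l^H$ for a $2n\times K$ factor $\m{V}_l$ whose top and bottom $n\times K$ blocks are $\overline{\m{A}_n\sbra{\m{\theta}}}\m{B}^{1/2}$ and $\m{A}_n\sbra{\m{\theta}}\diag\sbra{e^{i\phi_{k,l}}}\m{B}^{1/2}$, respectively. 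This exhibits $\cM\sbra{\m{X}_{:,l},\m{t}}\ge\m{0}$ with rank equal to $\rank\sbra{\cT\m{t}}$ and at most $K$, with the same $\m{t}$ across all $l$, so $\m{X}\in\cS_{\text{HT}}$.

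For the reverse inclusion $\cS_{\text{HT}}\subseteq\cS_0$, fix $\m{X}\in\cS_{\text{HT}}$ with a certifying $\m{t}$. Applying the single-snapshot result of \cite{wu2022maximum} to each column $\m{X}_{:,l}$ paired with this shared $\m{t}$ produces a CM line-spectral representation of $\m{X}_{:,l}$ whose nodes and moduli are read off from the Vandermonde decomposition of $\cT\m{t}$. Since $\cT\m{t}$ is identical for every $l$ and its rank is strictly less than $n$, the Carath\'{e}odory--Fej\'{e}r theorem produces a unique decomposition $\cT\m{t} = \m{A}_n\sbra{\m{\theta}}\m{B}\m{A}_n^H\sbra{\m{\theta}}$, so the same DOAs $\m{\theta}$ and moduli $\m{b}$ appear in every column while only the unit-modulus phase factors $\Phi_{k,l}$ vary with $l$. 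Assembling the columns yields $\m{X} = \m{A}\sbra{\m{\theta}}\m{B}\m{\Phi}\in\cS_0$. The main obstacle here is confirming that the per-column CM representation is forced to use the \emph{same} node set across snapshots; the rigidity of the Carath\'{e}odory--Fej\'{e}r decomposition of the shared $\cT\m{t}$ is what delivers this, and is the reason the common Toeplitz block was built into the definition of $\cS_{\text{HT}}$.

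For the uniqueness in the second conclusion, assume $\m{X} = \m{A}\sbra{\m{\theta}}\m{B}\m{\Phi}$ with distinct $\theta_k$ and $b_k>0$, and let $\m{t}$ be any certificate. The reverse-inclusion argument produces a second line-spectral representation of each $\m{X}_{:,l}$ in terms of the Vandermonde nodes of $\cT\m{t}$ with at most $K$ components. Since $K<n$ forces $2K<N=2n-1$, the Vandermonde matrix on the union of the two node sets has full column rank, so the two representations must coincide up to reordering. This pins the nodes down to $\m{\theta}$ and the moduli to $\m{b}$, whence $\cT\m{t} = \m{A}_n\sbra{\m{\theta}}\m{B}\m{A}_n^H\sbra{\m{\theta}}$ is uniquely determined.
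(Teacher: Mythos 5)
Your forward inclusion is sound and is essentially the paper's argument: your explicit factorization $\cM\sbra{\m{X}_{:,l},\m{t}}=\m{V}_l\m{V}_l^H$ with the shared $\cT\m{t}=\m{A}_n\sbra{\m{\theta}}\m{B}\m{A}_n^H\sbra{\m{\theta}}$ is just a square-root form of the congruence the paper writes in \eqref{eq:decom}, and it cleanly gives PSDness and $\rank\sbra{\cM\sbra{\m{X}_{:,l},\m{t}}}=\rank\sbra{\cT\m{t}}\le K$.

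The reverse inclusion and the uniqueness claim, however, have a genuine gap: nowhere do you invoke the rank \emph{equality} $\rank\sbra{\cT\m{t}}=\rank\sbra{\cM\sbra{\m{X}_{:,l},\m{t}}}$ that defines $\cS_{\text{HT}}$, and that constraint is the entire content of the theorem beyond the single-snapshot case. The ingredients you do use --- PSDness, $\rank\sbra{\cM\sbra{\m{X}_{:,l},\m{t}}}\le K$, the column-inclusion property forcing $\cH\m{X}_{:,l}=\m{A}\bS^l\m{A}^T$ on the nodes of $\cT\m{t}$, and the rigidity of the Carath\'{e}odory--Fej\'{e}r decomposition --- only yield that the PSD Schur complement satisfies $\m{B}-\overline{\bS^l}\m{B}^{-1}\bS^l\ge\m{0}$, i.e., $\abs{S_{k,l}}\le b_k$, not $\abs{S_{k,l}}=b_k$. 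Indeed, every step you take is valid for the relaxed set $\cS'_{\text{HT}}$, so if your argument were complete it would prove $\cS'_{\text{HT}}\subseteq\cS_0$, which Proposition \ref{prop:1} shows is false for $K>1$. You also misidentify the ``main obstacle'': forcing a common node set across snapshots is the easy part (range inclusion in the shared $\cT\m{t}$ plus Carath\'{e}odory--Fej\'{e}r); the hard part is forcing the per-snapshot moduli to \emph{equal} the Carath\'{e}odory weights, and the single-snapshot uniqueness result of \cite{wu2022maximum} cannot supply this because it presupposes that each column has exactly $K$ nonzero spikes, which is not known a priori for an abstract $\m{X}\in\cS_{\text{HT}}$. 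The paper closes this by observing that $\rank\sbra{\cM\sbra{\m{X}_{:,l},\m{t}}}=\rank\sbra{\m{B}}$ forces the Schur complement of $\begin{bmatrix}\m{B}&\overline{\bS^l}\\\bS^l&\m{B}\end{bmatrix}$ to vanish exactly, giving $\abs{S_{k,l}}=b_k$. The same omission affects your uniqueness argument: the representation of the columns pins down the nodes, but not the weights of $\cT\m{t}$; for instance $\cT\m{t}=\m{A}_n\sbra{\m{\theta}}\sbra{2\m{B}}\m{A}_n^H\sbra{\m{\theta}}$ still makes every $\cM\sbra{\m{X}_{:,l},\m{t}}$ PSD and is excluded only by the constraint $\rank\sbra{\cM\sbra{\m{X}_{:,l},\m{t}}}\le K$, which again forces the Schur complement to vanish and hence $p_k=b_k$.
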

	\begin{proof}
		We first show $\cS_0 \subset \cS_{\text{HT}}$. For any given $\m{X}=\m{A}(\m{\theta})\m{B}\m{\Phi} \in \cS_0$, it can be easily shown that
		\equ{
			\cH\m{X}_{:,l} = \m{A}_{n}\sbra{\m{\theta}} \cdot \diag(\m{B} \m{\Phi}_{:,l}) \cdot \m{A}_{n}^T\sbra{\m{\theta}}, \; l = 1,\ldots,L. \label{eq:Hankel2}
		} 
		For convenience, $\m{A}_{n}(\m{\theta})$ is rewritten as $\m{A}$ hereafter. Let $\cT\m{t}$ be given by \eqref{eq:T_Vander}.
		Then for each $l=1,\ldots,L$, we have
		\equ{
			\begin{split}
				 \cM(\m{X}_{:,l},\m{t})  = \begin{bmatrix} \overline{\m{A}} & \m{0}\\\m{0} & \m{A} \end{bmatrix} \begin{bmatrix} \m{B} & \diag(\m{B} \overline{\m{\Phi}_{:,l}}) \\ \diag(\m{B} \m{\Phi}_{:,l}) & \m{B} \end{bmatrix} \begin{bmatrix} \overline{\m{A}} & \m{0} \\\m{0} & \m{A} \end{bmatrix}^H  \ge \m{0}
			\end{split} \label{eq:decom}
		}
		and 
		\equ{
			\begin{split}
				\rank \sbra{\cM(\m{X}_{:,l},\m{t})} & = \rank \begin{bmatrix} \m{B} & \diag(\m{B} \overline{\m{\Phi}_{:,l}}) \\ \diag(\m{B} \m{\Phi}_{:,l}) & \m{B} \end{bmatrix} \\
				& = \rank\sbra{\m{B}} \le K, 
			\end{split} \label{eq:HTrank}
		}
		which holds since the Schur complement
		\equ{
			\m{B} - \diag(\m{B} \overline{\m{\Phi}_{:,l}}) \m{B}^{-1} \diag(\m{B} \m{\Phi}_{:,l}) = \m{0}. \label{eq:schur}
		}
		Consequently, we have $\m{X}\in \cS_{\text{HT}}$, and thus $\cS_0 \subset \cS_{\text{HT}}$.
		
		We next show $\cS_{\text{HT}} \subset \cS_0 $. For any $\m{X}\in \cS_{\text{HT}}$, we have $\rank \sbra{\cT\m{t}} \le K $ and $\cT \m{t} \ge \m{0}$. Using the Carath\'{e}odory-Fej\'{e}r theorem \cite[Theorem 11.5]{yang2018sparse}, there exist distinct $\lbra{\theta_k}^K_{k=1}$ and $\lbra{b_k \ge 0}^K_{k=1}$ such that $\cT\m{t}$ admits the decomposition as in \eqref{eq:T_Vander}. Moreover, it follows from the column inclusion property of PSD matrices that 
		\equ{
			\cH\m{X}_{:,l} \in \text{range}\sbra{\cT\m{t}}, \label{eq:range}
		} 
		for each $l = 1,\ldots,L$. Since $K<n$, it follows from the proof of \cite[Theorem 1]{wu2022maximum} that for a given $l$, $ \cH\m{X}_{:,l}$ must admit a Vandermonde decomposition 
		\equ{
			\cH\m{X}_{:,l} = \m{A}\bS^l\m{A}^T, \label{eq:H}
		} 
		where the diagonal matrix $\bS^l = \diag\sbra{\m{S}_{:,l}}$ with $\m{S}\in \bC^{K\times L}$. Consequently, we have that
		\equ{
			\begin{split}
				\cM(\m{X}_{:,l},\m{t}) = \begin{bmatrix} \overline{\m{A}} & \m{0}\\\m{0} & \m{A} \end{bmatrix} \begin{bmatrix} \m{B} & \overline{\bS^l} \\ \bS^l & \m{B} \end{bmatrix} \begin{bmatrix} \overline{\m{A}} & \m{0}\\\m{0} & \m{A} \end{bmatrix}^H.
			\end{split}  \label{eq:decom_2}
		}
		Applying the rank equality constraints in \eqref{eq:S_1}, we get that $\rank \sbra{\m{B}} = \rank \begin{bmatrix} \m{B} & \overline{\bS^l} \\ \bS^l & \m{B} \end{bmatrix}$, and thus the Schur complement 
		\equ{
			\m{B} - \overline{\bS^l} \m{B}^{-1} \bS^l = \m{0},
		}
		or equivalently,
		\equ{
			\abs{S_{k,l}} = b_k, \ k=1,\ldots,K, \, l=1,\ldots,L. \label{eq:mod_e}
		}
		This means that each row of $\m{S}$ is a CM signal. It follows from \eqref{eq:H} and \eqref{eq:mod_e} that $\m{X} = \m{A}(\m{\theta})\m{B}\m{\Phi} \in \cS_0$ with $\Phi_{k,l}=S_{k,l}/\abs{S_{k,l}}$, completing the proof of the first part.
		
		We then prove the second part. Since $b_k>0,k=1,\ldots,K$, we have $\rank\sbra{\cH \m{X}_{:,l}} = \rank\sbra{\diag(\m{B} \m{\Phi}_{:,l})} = K$. It then follows from \eqref{eq:range} and \eqref{eq:S_1} that $\rank\sbra{\cT\m{t}} = K$. Applying the Carath\'{e}odory-Fej\'{e}r theorem \cite[Theorem 11.5]{yang2018sparse}, there must exist a Vandermonde decomposition for $\cT\m{t}$, $\cT\m{t} = \m{A} \m{P} \m{A}^H$ with $\m{P} = \diag\sbra{\mbra{p_1,\ldots,p_K}}, p_k > 0$. Using $p_k > 0$, the rank inequality in \eqref{eq:S_1} and the decomposition in \eqref{eq:decom}, we have that
		\equ{
			\begin{split}
				\rank \sbra{\m{P}} = K & \ge \rank \sbra{\cM(\m{X}_{:,l},\m{t})} \\
				& = \rank \begin{bmatrix} \m{P} & \diag(\m{B} \overline{\m{\Phi}_{:,l}}) \\ \diag(\m{B} \m{\Phi}_{:,l}) & \m{P} \end{bmatrix},
			\end{split}
		}
		which yields that the Schur complement
		\equ{
			\m{P} - \diag(\m{B} \overline{\m{\Phi}_{:,l}}) \m{P}^{-1} \diag(\m{B} \m{\Phi}_{:,l}) = \m{0},
		}
		i.e.,
		\equ{
			p_k = b_k, \ k = 1,\ldots,K,
		}
		completing the proof.
	\end{proof}
	
	The assumption $K < n$ in Theorem \ref{thm:K<n} is to guarantee that the Hankel and Toeplitz matrices therein are low-rank and then admit the Vandermonde decompositions. But we point out that $K<n$ is not a limit. The case of $K\ge n$ will be tackled in Section \ref{sec:extensions}.
	
	Theorem \ref{thm:K<n} demonstrates that the structures in $\cS_0$ are fully characterized by  $\cS_{\text{HT}}$. Based on this, in the noiseless case, we reformulate the problem in \eqref{eq:noiseless} as the following rank-constrained Hankel-Toeplitz matrix-based feasibility problem:
	\equ{
		\begin{split}
			& \text{find} \ \m{t}, \st \rank\sbra{\cT\m{t}} = \rank\sbra{\cM(\m{X}^\star_{:,l},\m{t})}\leq K, \\
			& \qquad \qquad \qquad \quad \  \cM(\m{X}^\star_{:,l},\m{t})\geq \m{0},\;  l=1,\ldots,L.
		\end{split} \label{eq:fea1}
	}
	It follows from Theorem \ref{thm:K<n} that the optimal solution of \eqref{eq:fea1} is uniquely given by \eqref{eq:Tt}.
	Once we find the solution $\m{t}$, the DOAs $\m{\theta}$ and the moduli $\m{b}$ are extracted from $\cT\m{t}$ by finding its Vandermonde decomposition in \eqref{eq:Tt} via subspace methods \cite{schmidt1986multiple,barabell1983improving,roy1989esprit}, and then the phases $\m{\Phi}$ are retrieved from $\m{X}^\star$.
	
	In the noisy case, the minimization problem in \eqref{eq:p_S0_CM} is written equivalently as the following rank-constrained Hankel-Toeplitz matrix-based recovery problem:
	\equ{
		\begin{split}
			& \min_{\m{X},\m{t}} \frobn{\m{Y} - \m{X}}^2, \\
			& \st \rank\sbra{\cT\m{t}} = \rank\sbra{\cM(\m{X}_{:,l},\m{t})} \leq K, \\
			& \qquad \qquad \ \; \cM(\m{X}_{:,l},\m{t})\geq \m{0},\;  l=1,\ldots,L.
		\end{split} \label{eq:p_S1}
	}
	Suppose that an optimal solution to the NLS problem in \eqref{eq:NLS_CM} is given by $\set{\breve{\m{\theta}},\breve{\m{b}},\breve{\m{\Phi}} }$. We have $\breve{b}_k \neq 0$ almost surely\footnote{If $\breve{b}_k =  0$, then $\m{Y}$ is composed exactly of $K-1$ or fewer source signals, which occurs with probability zero in the presence of random noise.}. An optimal solution to \eqref{eq:p_S1} is then given by
	\equ{ 
		\begin{split}
			\breve{\m{X}} &= \sum_{k=1}^K\m{a}(\breve{\theta}_k) \breve{b}_k \breve{\m{\Phi}}_{k,:}, \\
			\cT\breve{\m{t}} &= \m{A}_n(\breve{\m{\theta}})\diag(\breve{\m{b}})\m{A}_n^H
			(\breve{\m{\theta}}),
		\end{split}
	}
	where $\breve{\m{t}}$ is uniquely determined by $\breve{\m{X}}$.
	Once the problem in \eqref{eq:p_S1} is solved, the MLE of parameters can be easily extracted from $\cT \m{t}$ and $\m{X}$ by computing the Vandermonde decomposition of $\cT \m{t}$ and a least squares method. 
	
	\subsection{Relaxation of Rank Equalities} \label{sec:prop}
	It is difficult to tackle explicitly the rank equality constraints 
	\equ{
		\rank\sbra{\cT\m{t}} = \rank\sbra{\cM(\m{X}_{:,l},\m{t})}, l = 1,\ldots,L \label{eq:rankeq}
	} 
	in \eqref{eq:fea1} and \eqref{eq:p_S1} in an algorithm.
	A natural idea is to drop the constraints above by relaxing $\cS_{\text{HT}}$ to the following set:
	\equ{
		\begin{split}
			\cS'_{\text{HT}} \! & = \! \lbra{{\m{X} \! : \rank\sbra{\cM(\m{X}_{:,l},\m{t})}\leq K, \cM(\m{X}_{:,l},\m{t})\geq \m{0},} \right. \\
				& \qquad \qquad \qquad \qquad \qquad  \left.  
				{ l=1,\ldots,L, \text{for some } \m{t} \in \bC^{n}} } \\
			& = \! \lbra{ \m{X} \! :  \cM(\m{X}_{:,l},\m{t}) \in \bS^+_K,  l=1,\ldots,L,  \text{for some } \m{t} \in \bC^{n} }  \label{eq:S_2}
		\end{split}
	}
	where $\bS^+_K$ is the set of positive semidefinite matrices of rank no greater than $K$.
	The resulting optimization problems are given by:
	\equ{
		\text{find} \ \m{t}, \ \st \cM(\m{X}^\star_{:,l},\m{t}) \in \bS^+_K, \  l=1,\ldots,L, \label{eq:p_S3}
	}
	in the noiseless case and 
	\equ{
		\begin{split}
			\min_{\m{X},\m{t}} \frobn{\m{Y} - \m{X}}^2,
			& \st \cM(\m{X}_{:,l},\m{t}) \in \bS^+_K, \  l=1,\ldots,L.
		\end{split} \label{eq:p_S2}
	}
	in the noisy case.
	We have the following result.
	\begin{prop} \label{prop:1}
		If $K=1$, we have $\cS'_{\text{HT}} = \cS_{\text{HT}} = \cS_0$. If $K>1$, then
		\equ{
			\cS'_{\text{HT}} = \cS'_0 \triangleq \cS_0 \cup \cS''_0,
		}
		where 
		\equ{
			\begin{split}
				\cS''_0 
				= & \lbra{
					{ \m{A}' \sbra{\m{\theta}} \m{B} \m{\Psi}: \abs{\Psi_{k,l}}\leq 1, \; k = 1,\ldots,K', \; l = 1,\ldots,L,   } \right. \\
					& \left.  
					{ \qquad \qquad \quad \quad  \abs{ \lbra{k: \abs{\Psi_{k,l}}<1 } } \le \min \lbra{K', K - K'}, K' < K} 
				}
			\end{split} \label{eq:S1}
		}with the $N \times K'$ Vandermonde matrix $\m{A}' \sbra{\m{\theta}} $, $\m{B}=\diag\sbra{\m{b}} \in \bR^{K' \times K'}$ with $\lbra{b_k > 0}$ and $\m{\Psi} \in \bC^{K' \times L}$.
	\end{prop}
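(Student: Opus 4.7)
The plan is to adapt the proof of Theorem \ref{thm:K<n} while carefully tracking what is lost when the rank-equality constraint \eqref{eq:rankeq} is dropped. I will establish the forward inclusion $\cS'_{\text{HT}} \subseteq \cS_0 \cup \cS''_0$ by a structural analysis and the reverse inclusion by constructing a certifying $\m{t}$, then specialize to $K=1$.

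For the forward direction, fix $\m{X}\in\cS'_{\text{HT}}$ with certificate $\m{t}$ and set $K'=\rank\sbra{\cT\m{t}}\le K$. The degenerate case $K'=0$ forces $\cT\m{t}=\m{0}$, and PSD column inclusion then forces every $\cH\m{X}_{:,l}$ to vanish, giving $\m{X}=\m{0}\in\cS_0$ (take $\m{b}=\m{0}$). For $K'\ge 1$, since $K'\le K<n$, the Carath\'{e}odory-Fej\'{e}r theorem yields $\cT\m{t}=\m{A}'_n(\m{\theta})\m{B}\m{A}'_n(\m{\theta})^H$ with $K'$ distinct nodes and positive $\m{b}$. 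Column inclusion places each $\cH\m{X}_{:,l}$ in $\text{range}(\cT\m{t})$, and invoking the Kronecker/Hankel-rank argument used in the proof of \cite[Theorem 1]{wu2022maximum}, $\cH\m{X}_{:,l}$ inherits a Vandermonde decomposition $\m{A}'_n(\m{\theta})\bS^l\m{A}'_n(\m{\theta})^T$ with $\bS^l=\diag\sbra{\m{S}_{:,l}}$ sharing the \emph{same} nodes, so that $\m{X}_{:,l}=\m{A}'(\m{\theta})\m{S}_{:,l}$.

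Substituting these Vandermonde factorizations into $\cM(\m{X}_{:,l},\m{t})$ exactly as in \eqref{eq:decom_2}, the two relaxed constraints reduce to the PSD and rank-$\le K$ conditions on the $2K'\times 2K'$ core matrix $\begin{bmatrix}\m{B} & \overline{\bS^l}\\ \bS^l & \m{B}\end{bmatrix}$. Its Schur complement is the diagonal matrix with entries $b_k-|S_{k,l}|^2/b_k$, so PSD gives $|S_{k,l}|\le b_k$, and the rank bound becomes $K'+|\{k:|S_{k,l}|<b_k\}|\le K$. Normalizing $\Psi_{k,l}=S_{k,l}/b_k$ produces $|\Psi_{k,l}|\le 1$ and the count bound $|\{k:|\Psi_{k,l}|<1\}|\le K-K'$; since only $K'$ indices are available, the bound is also trivially at most $K'$, explaining the $\min\{K',K-K'\}$ in \eqref{eq:S1}. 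When $K'=K$ the count is forced to zero, so $|\Psi_{k,l}|=1$ for all $k,l$ and $\m{X}\in\cS_0$; otherwise $\m{X}\in\cS''_0$.

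The reverse direction is a direct construction: $\cS_0\subseteq\cS_{\text{HT}}\subseteq\cS'_{\text{HT}}$ by Theorem \ref{thm:K<n}, while for $\m{X}=\m{A}'(\m{\theta})\m{B}\m{\Psi}\in\cS''_0$, setting $\cT\m{t}=\m{A}'_n(\m{\theta})\m{B}\m{A}'_n(\m{\theta})^H$ and applying the block factorization gives $\cM(\m{X}_{:,l},\m{t})\ge\m{0}$ and $\rank\sbra{\cM(\m{X}_{:,l},\m{t})}=K'+|\{k:|\Psi_{k,l}|<1\}|\le K$ by the hypothesis on $\cS''_0$. The case $K=1$ then follows at once: $K'\in\{0,1\}$ forces the count to be $0$, so $\cS''_0=\emptyset$ and $\cS'_{\text{HT}}=\cS_0$. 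The main obstacle is the Hankel-to-Vandermonde step in the forward direction: arguing not merely that $\cH\m{X}_{:,l}=\m{A}'_n\m{V}^l(\m{A}'_n)^T$ for some $\m{V}^l$, but that $\m{V}^l$ must be diagonal with the prescribed $K'$ nodes inherited from $\cT\m{t}$; this is exactly where the hypothesis $K'<n$ is used, via the single-snapshot Hankel-rank argument of \cite{wu2022maximum}.
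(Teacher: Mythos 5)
Your proposal is correct and follows essentially the same route as the paper's proof: Carath\'{e}odory-Fej\'{e}r decomposition of $\cT\m{t}$, column inclusion plus the single-snapshot Hankel argument of \cite{wu2022maximum} to force matching Vandermonde decompositions $\cH\m{X}_{:,l}=\m{A}\bS^l\m{A}^T$, the Schur complement of the $2K'\times 2K'$ core to get $\abs{S_{k,l}}\le b_k$, and the column-pair rank count $\rank\sbra{\m{C}^l}=K'+\abs{\lbra{k:\abs{S_{k,l}}<b_k}}\le K$ yielding the $\min\lbra{K',K-K'}$ bound. The only differences are organizational (you track $K'=\rank\sbra{\cT\m{t}}$ directly rather than splitting $\cS'_{\text{HT}}$ into $\cS_{\text{HT}}\cup\cS''_{\text{HT}}$ by whether the rank equality holds, and you spell out the degenerate $K'=0$ case and the reverse inclusion that the paper leaves as "similar arguments").
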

	
	\begin{proof}
		See Appendix \ref{append:1}.
	\end{proof}
	
	It is implied by Proposition \ref{prop:1} that the relaxed set $\cS'_{\text{HT}}$ can be viewed as a relaxation of the set $\cS_0$ by allowing appearance of the elements in $\cS''_0$.
	To better illustrate the set $\cS'_{\text{HT}}$, we present in Fig.~\ref{fig:table} a simple example in the case of $K = L = 4$. According to Proposition \ref{prop:1}, the number $K'$ of sources in $\cS''_0$ can be $1$, $2$ and $3$. As $K' = 3$, up to $K - K' = 1$ entry in each column of $\m{\Psi}$ can lie in the unit circle, with the others on the unit circle. Similar arguments hold in the case of $K' =1,2$. 
	
	\begin{figure}
		\centerline{\includegraphics[width=11cm]{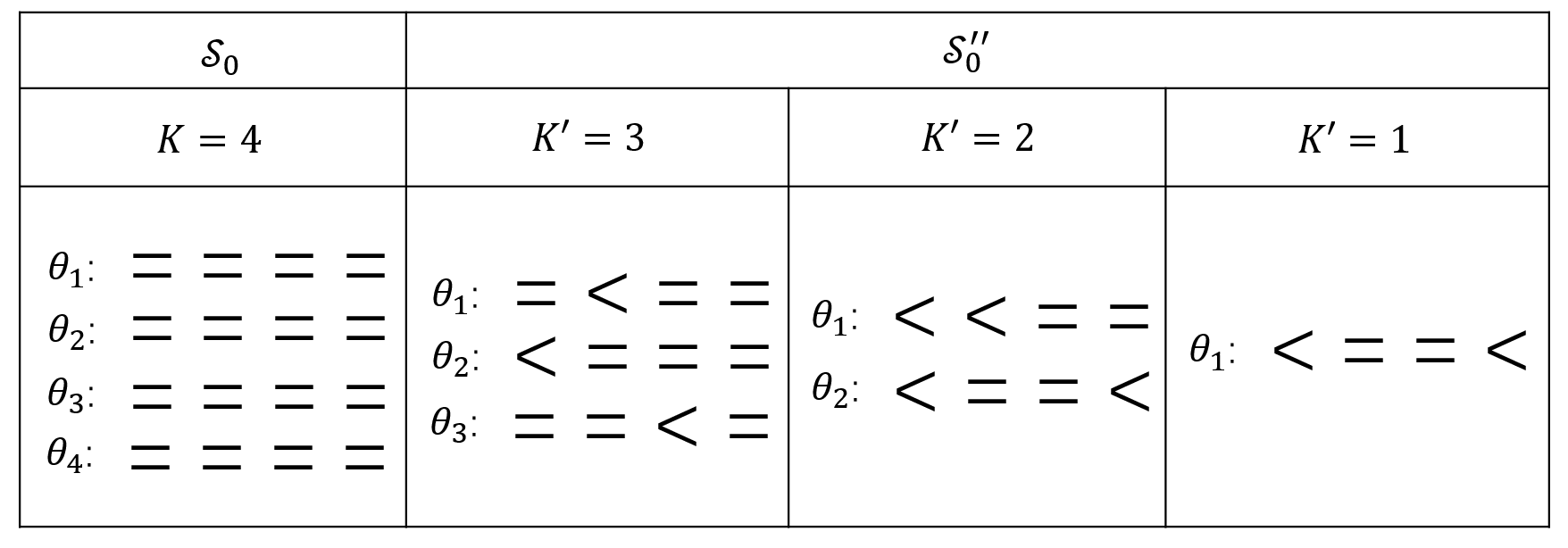}}
		\caption{An example with $K=L=4$ where ``$=$'' denotes $\abs{\Phi_{k,l}} = 1$ in $\cS_0$ or $\abs{\Psi_{k,l}} = 1$ in $\cS''_0$, and ``$<$'' denotes $\abs{\Psi_{k,l}} < 1$ in $\cS''_0$.}
		\label{fig:table}
	\end{figure}
	
	It is seen that the true signal set $\cS_0$ has $2K+KL$ degrees of freedom (DoFs), i.e., $K$ DOAs, $K$ moduli and $KL$ phases, while the set $\cS''_0$ has $2K' +KL$ DoFs at most, i.e., $K'$ DOAs, $K'$ moduli and up to $K L$ variables in $\m{\Psi}$.  Intuitively, given the prior knowledge that the true signal is in $\cS_0$, it is unlikely that the estimated signal belongs to the set $\cS''_0$ at the cost of reduced number of sources and DoFs. This implies that the proposed relaxation is tight and does not alter the optimal solution in general, which will be validated by numerical results in Section \ref{sec:sim}.
	
	\section{Algorithms} \label{sec:alg}
	Both the feasibility problem in \eqref{eq:p_S3} and the minimization problem in \eqref{eq:p_S2} are nonconvex and challenging to solve since they are rank-constrained. Interestingly, the ADMM algorithm has shown good convergence and performance for nonconvex rank-constrained problems \cite{andersson2014new,boyd2011distributed,diamond2018general,wu2022maximum}. Some theoretical results have also been developed in \cite{xu2012alternating,jiang2014alternating,wang2019global}. Therefore, we apply ADMM to solving the two rank-constrained problems in this paper.
	In particular, both the problems in \eqref{eq:p_S3} and  \eqref{eq:p_S2} can be rewritten as:
	\equ{
		\min_{\m{X},\m{t}} f\sbra{\m{X}}, \st \cM(\m{X}_{:,l},\m{t}) \in \bS^+_K, \; l=1,\ldots,L, \label{eq:general}
	}
	where $f\sbra{\m{X}}$ denotes a cost function of $\m{X}$. To apply the ADMM, we introduce a series of auxiliary Hermitian matrix variables $\lbra{\m{Q}^l}_{l=1}^L$ and rewrite \eqref{eq:general} as:
	\equ{ 
		\begin{split}
			 \min_{\m{X},\m{t}, \lbra{\m{Q}^l}_{l=1}^L } f\sbra{\m{X}} + \sum^L_{l=1}\mathbb{I}_{\bS^+_K}\sbra{\m{Q}^l},   \st \m{Q}^l = \cM(\m{X}_{:,l},\m{t}), \; l = 1, \ldots, L,
		\end{split} \label{eq:ADMM_fun}
	}
	where $\mathbb{I}_{\mathscr{S}}(\cdot)$ is the indicator function: $\mathbb{I}_{\mathscr{S}}(\m{W})=0$ if $\m{W}\in \mathscr{S}$ or $+\infty$ otherwise and the set $\mathscr{S}=\bS^+_K$ here. For notational simplicity, we write $\cM(\m{X}_{:,l},\m{t})$ as $\cM^l$ hereafter if needed.
	The augmented Lagrangian function is given by
	\equ{\begin{split}
			& \cL\sbra{\m{X}, \m{t},\lbra{\m{Q}^l}^L_{l=1},\lbra{\m{\Lambda}^l}^L_{l=1} } \\
			& = f\sbra{\m{X}} + \sum^L_{l=1}\mathbb{I}_{\bS^+_K}\sbra{\m{Q}^l} + \sum^L_{l=1} \langle \m{\Lambda}^l,\m{Q}^l -\cM^l \rangle_{\bR}  + \frac{\rho}{2} \sum^L_{l=1} \frobn{\m{Q}^l-\cM^l}^2, \\
	\end{split}}
	where $\rho>0$ is a penalty parameter and $\lbra{\m{\Lambda}^l}^L_{l=1}$ is a series of Hermitian Lagrangian multiplier. Assume that at iteration $j$ we have computed $\cM^l_{j}$ and $\m{\Lambda}^l_j$ for $l=1,\ldots,L$, the $(j+1)$-th iteration of ADMM is given by
	{
		\lentwo\equa{
			\lbra{\m{Q}_{j+1}^l} &=& \argmin_{	\lbra{\m{Q}^l} \in \bS^+_K} \sum^L_{l=1} \frobn{ \m{Q}^l -\cM^l_{j}+{\rho^{-1}}\m{\Lambda}^l_j }^2, \label{eq:1} \\
			\sbra{\m{X}_{j+1},\m{t}_{j+1}} & = & \argmin_{\m{X},\m{t}} f\sbra{\m{X}} \nonumber \\ 
			&& \; + \frac{\rho}{2} \sum^L_{l=1} \frobn{\m{Q}^l_{j+1} -\cM^l +{\rho^{-1}}\m{\Lambda}^l_j}^2, \label{eq:2} \\
			\m{\Lambda}_{j+1}^l &=& \m{\Lambda}_j^l + \rho \sbra{\m{Q}_{j+1}^l - \cM_{j+1}^l}, \; l=1,\ldots,L. \label{eq:Lambda2} 
		}
	}For the first subproblem in \eqref{eq:1}, we have the updates \cite{dax2014low}
	\equ{
		\m{Q}_{j+1}^l = \cP_{\bS^+_K}\sbra{\cM_{j}^l-{\rho^{-1}}\m{\Lambda}^l_j}, \label{eq:Q_HT}
	}
	where the projection $\cP_{\bS^+_K}\sbra{\cdot}$ is obtained as the truncated eigen-decomposition of the Hermitian matrix argument by setting all but the largest $K$ (or less) positive eigenvalues to zero. 
	
	For the second subproblem in \eqref{eq:2}, the variables $\m{X}$ and $\m{t}$ are separable.
	Denote a Hermitian matrix $\m{W}^l = \m{Q}^l_{j+1}+{\rho^{-1}}\m{\Lambda}^l_j$ and write $\m{W}^l = \begin{bmatrix} \m{W}^l_{1} & \sbra{\m{W}^l_{2}}^H \\ \m{W}^l_{2} & \m{W}^l_{3} \end{bmatrix}$ as a block matrix like $\cM^l$. 
	The closed-form solution to $\m{t}$ is given by
	\equ{
		\m{t}_{j+1} =  \frac{1}{2L} \m{D}_{\cT}^{-1} \cT^H \sum^L_{l=1}\sbra{\overline{\m{W}^l_{1}}+\m{W}_{3}^l}, \label{eq:t_HTs}
	}
	where $\cT^H$ denotes the adjoint of the Hermitian Toeplitz operator $\cT$ by mapping an $n \times n $ Hermitian matrix $\m{C}$ to an $n$-dimensional vector $\cT^H\m{C}=\set{\sum_{i-j=a-1}C_{ij}}_{a=1}^{n}$ and $\m{D}_{\cT}=\diag\sbra{\mbra{n,n-1,\dots,1}}$. The derivation of \eqref{eq:t_HTs} is given in Appendix \ref{append:2}. The solution to $\m{X}$ depends on the cost function $f\sbra{\m{X}}$ and will be given later.
	
	The ADMM iterations are terminated if primal residual $\sum^L_{l=1} \frobn{\m{R}^{l}_{j+1,p}}$ and dual
	residual $ \sum^L_{l=1} \frobn{\m{R}^{l}_{j+1,d}}$ computed with
	\equ{
		\m{R}^{l}_{j+1,p} = \m{Q}_{j+1}^l - \cM^l_{j+1} \quad \text{and} \quad
		\m{R}^{l}_{j+1,d} = \rho \sbra{  \cM^l_{j+1} -  \cM^l_j }, \nonumber
	}
	respectively, for $l=1,\ldots,L$, are sufficiently small or a maximum number of iterations is reached, see \cite{boyd2011distributed}.
	Once ADMM has converged, we compute the Vandermonde decomposition in \eqref{eq:Tt} of the Toeplitz submatrix in $\cM^l$ using ESPRIT, from which the estimated DOAs $\widehat{\m{\theta}}$ are obtained. 
	
	\subsection{ADMM for the Feasibility Problem in \eqref{eq:p_S3}}
	We make the substitution
	\equ{
		f\sbra{\m{X}} = \mathbb{I}_{\lbra{\m{0}}}\sbra{\m{X}-\m{X}^\star},
	}
	which fixes the variable $\m{X}$ on $\m{X}^\star$. For the subproblem in \eqref{eq:2}, it has the following closed-form solution to $\m{X}$:
	\equ{
		\m{X}_{j+1} = \m{X}^\star. \label{eq:fix}
	} 
	
	\subsection{ADMM for the Minimization Problem in \eqref{eq:p_S2}} 
	We make the substitution
	\equ{
		f\sbra{\m{X}} = \frobn{\m{X}-\m{Y}}^2. \label{eq:loss}
	}
	For the subproblem in \eqref{eq:2}, it has the following closed-form solution to $\m{X}$:
	\equ{
		\sbra{\m{X}_{:,l}}_{j+1} = \sbra{\m{I} + \rho \m{D}_{\cH}}^{-1} \sbra{\m{Y}_{:,l} + \rho \cH^H\m{W}_2^l}, \label{eq:x_HT}
	}
	for each $l=1,\ldots,L$, 
	where $\mathcal{H}^H$ denotes the adjoint of the Hankel operator $\mathcal{H}$ by mapping an $n \times n$ matrix $\m{C}$ to an $N$-dimensional vector with $\mathcal{H}^H\m{C}=\set{\sum_{i+j-1=a}C_{ij}}_{a=1}^{N}$ and $\m{D}_\cH = \diag\sbra{\mbra{1,2,\dots,n,n-1,\dots,1}}$. The derivation of \eqref{eq:x_HT} is given in Appendix \ref{append:2}.
	
	\begin{algorithm}
		\caption{\emph{S}tructured \emph{ma}trix \emph{r}ecovery \emph{t}echnique (SMART) for CM DOA estimation.}
		\begin{algorithmic}[1]
			\REQUIRE Observation $\m{X}^\star$ (or $\m{Y}$), source number $K$.
			\ENSURE Estimates of parameters $\lbra{\m{\theta},\m{b},\m{\Phi}}$.
			\STATE Initialize $\cM^l_0$, $\m{\Lambda}^l_0$. 
			\WHILE{not converged}
			\STATE Conduct the updates in \eqref{eq:Q_HT}, \eqref{eq:t_HTs}, \eqref{eq:fix} (or \eqref{eq:x_HT}), and \eqref{eq:Lambda2} one after one.
			\ENDWHILE
			\STATE Estimate $\lbra{\m{\theta},\m{b}}$ by computing the decomposition in \eqref{eq:Tt} using ESPRIT and $\m{\Phi}$ from $\m{X}$.
		\end{algorithmic} \label{alg:alg_HT}
	\end{algorithm}
	
	\subsection{Complexity and Convergence}
	We summarize the ADMM algorithm for the feasibility problem in \eqref{eq:p_S3} and the minimization problem in \eqref{eq:p_S2} in Algorithm \ref{alg:alg_HT}. We call the proposed approach as structured matrix recovery technique (SMART) due to the key Hankel-Toeplitz matrix optimization model. 
	The computations of SMART are dominated by the projection $\cP_{\bS^+_K}$ in \eqref{eq:Q_HT} that is computed by the truncated eigen-decomposition of which the computational complexity is $\mathcal{O}(N^2K)$ \cite[Section 3.3.2]{halko2011finding}, so the total complexity per iteration is $\mathcal{O}(N^2KL)$. It is worth noting that the updates in \eqref{eq:Q_HT} involving $\cP_{\bS^+_K}$ are independent and can be computed in parallel for each snapshot. Therefore, the computational complexity per iteration can be reduced to $\mathcal{O}(N^2K+N^2L)$ by using parallel computing where $N^2L$ exists due to the updates of $\m{t}$ in \eqref{eq:t_HTs}. Practical implementation of ADMM for large $L$ will depend on the specific parallel architecture and programming environment used. 

		For convex optimization problems, the global convergence of the ADMM has been extensively studied and understood \cite{boyd2011distributed}; nevertheless, the convergence for nonconvex problems is still ongoing \cite{jiang2014alternating,li2015global,wang2019global}. 
		Inspired by \cite{jiang2014alternating,wu2022maximum}, we provide convergence analysis for the proposed SMART. We denote $\m{z}=\lbra{\m{X},\m{t}}$, $f(\m{z})=f(\m{X})$, and $\cA \m{z} = \lbra{ \cM \sbra{\m{X}_{:,l},\m{t}} }^L_{l=1} $ that is linear in (the real and complex parts of) $\m{z}$. The problem in \eqref{eq:ADMM_fun} is rewritten as
	\equ{
		\min_{\m{z}, \lbra{\m{Q}^l}} f\sbra{\m{z}} + \sum^L_{l=1} \mathbb{I}_{\bS_K^+}\sbra{\m{Q}^l}, 
		\st \lbra{\m{Q}^l} = \cA\m{z}. \label{eq:P_conver}
	}
	We have the following theorem, of which the detailed proof is similar to that of \cite[Theorem 2]{wu2022maximum} and will be omitted.
	
		\begin{thm} \label{thm:Convergence}
		Let $\lbra{\lbra{\m{Q}^l_j}, \m{z}_j, \lbra{\m{\Lambda}^l_j}}$ be a sequence generated by SMART. Assume that
		\equ{
			\lim_{j \to \infty}  \frobn{ \m{z}_{j+1} - \m{z}_j }^2 +
			\sum^L_{l=1} \frobn{ \m{\Lambda}^l_{j+1} - \m{\Lambda}^l_j }^2 = 0. \label{eq:assum}
		}
		Then for any limit point $\lbra{\lbra{\m{Q}^l_*}, \m{z}_*, \lbra{\m{\Lambda}^l_*}}$, $\lbra{\m{z}_*, \lbra{\m{Q}^l_*}}$ is a stationary point of \eqref{eq:P_conver}, i.e.,
		\equ{
			\begin{split}
				& \m{0} \in \partial \mathbb{I}_{\bS_K^+}\sbra{\m{Q}^l_*} + \m{\Lambda}^l_*, \; l = 1,\ldots,L, \\
				& \nabla f \sbra{\m{z}_*} = \cA^H \lbra{\m{\Lambda}_*^l}, \\
				& \lbra{\m{Q}^l_*} = \cA \m{z}_*,
			\end{split} \label{eq:stationary}
		}
		where $\cA^H$ is the adjoint operator of $\cA$ and $\partial \mathbb{I}_{\bS_K^+}$ is the general subgradient \cite[Definition 8.3]{rockafellar2009variational}.
	\end{thm}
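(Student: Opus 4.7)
My plan is to verify the three stationarity conditions in \eqref{eq:stationary} by passing to the limit in the first-order optimality conditions of the two ADMM subproblems along the convergent subsequence $\lbra{\lbra{\m{Q}^l_j}, \m{z}_j, \lbra{\m{\Lambda}^l_j}}$. The assumption \eqref{eq:assum} will play the role usually played by a sufficient-decrease inequality on the augmented Lagrangian: it delivers for free the two vanishing residuals that the stationary-point analysis needs, namely $\m{z}_{j+1}-\m{z}_j\to\m{0}$ and $\m{\Lambda}^l_{j+1}-\m{\Lambda}^l_j\to\m{0}$.

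First I would establish primal feasibility $\lbra{\m{Q}^l_*}=\cA\m{z}_*$. Rearranging the dual update \eqref{eq:Lambda2} gives $\m{Q}^l_{j+1}-\cM^l_{j+1}=\rho^{-1}(\m{\Lambda}^l_{j+1}-\m{\Lambda}^l_j)$, and \eqref{eq:assum} forces the right-hand side to $\m{0}$. Since $\cA$ is linear (hence continuous) in $\m{z}$, passing to the limit along the subsequence yields $\m{Q}^l_*=\cM(\m{X}_{*,l},\m{t}_*)$ for each $l$. Next I would use the optimality of the $\m{Q}$-subproblem \eqref{eq:1}: by Fermat's rule for the nonconvex indicator, $\m{0}\in\partial\mathbb{I}_{\bS^+_K}(\m{Q}^l_{j+1})+\m{\Lambda}^l_j+\rho(\m{Q}^l_{j+1}-\cM^l_j)$, where $\partial$ denotes the general (limiting) subgradient of \cite[Definition 8.3]{rockafellar2009variational}. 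Substituting $\m{\Lambda}^l_j+\rho(\m{Q}^l_{j+1}-\cM^l_{j+1})=\m{\Lambda}^l_{j+1}$ from the dual update rewrites this as $\m{0}\in\partial\mathbb{I}_{\bS^+_K}(\m{Q}^l_{j+1})+\m{\Lambda}^l_{j+1}+\rho(\cM^l_{j+1}-\cM^l_j)$. The hypothesis \eqref{eq:assum} and linearity of $\cA$ make $\cM^l_{j+1}-\cM^l_j\to\m{0}$; invoking the outer semicontinuity (closed-graph property) of the limiting subdifferential at $\m{Q}^l_*$ then delivers $\m{0}\in\partial\mathbb{I}_{\bS^+_K}(\m{Q}^l_*)+\m{\Lambda}^l_*$. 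For the gradient identity, the $\m{z}$-subproblem \eqref{eq:2} is smooth in $\m{z}$ in the noisy case (quadratic in $\m{X}$ and quadratic in $\m{t}$), so its stationarity reads $\nabla f(\m{z}_{j+1})=\rho\cA^H\lbra{\m{Q}^l_{j+1}-\cM^l_{j+1}+\rho^{-1}\m{\Lambda}^l_j}=\cA^H\lbra{\m{\Lambda}^l_{j+1}}$, again by the dual update; continuity of $\nabla f$ and of the linear map $\cA^H$ yields $\nabla f(\m{z}_*)=\cA^H\lbra{\m{\Lambda}^l_*}$ in the limit. For the feasibility problem, the variable $\m{X}$ is pinned at $\m{X}^\star$ so the same argument applies to the partial gradient with respect to $\m{t}$ (which is affine in $\cM^l$).

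The main obstacle is the nonconvexity of $\bS^+_K$. The indicator $\mathbb{I}_{\bS^+_K}$ is nonconvex and nonsmooth, so the relevant subdifferential is the Rockafellar–Wets limiting subgradient rather than a convex one, and the crucial step is the graph-closedness argument that lets us infer $\m{0}\in\partial\mathbb{I}_{\bS^+_K}(\m{Q}^l_*)+\m{\Lambda}^l_*$ from the iterate inclusions plus the vanishing residuals. Once this subdifferential passage is carefully justified, the remaining manipulations are linear-algebraic bookkeeping using the dual update and continuity of $\cA$ and $\nabla f$, mirroring the argument in \cite[Theorem 2]{wu2022maximum} and the nonconvex-ADMM framework of \cite{jiang2014alternating,wang2019global}, which is why the detailed calculation can be omitted.
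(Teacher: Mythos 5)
Your argument is correct and is precisely the standard nonconvex-ADMM limit-point analysis that the paper itself invokes (the paper omits the proof, deferring to the analogous Theorem 2 of \cite{wu2022maximum}): primal feasibility from the dual update \eqref{eq:Lambda2} combined with \eqref{eq:assum}, the inclusion $\m{0}\in\partial\mathbb{I}_{\bS_K^+}(\m{Q}^l_*)+\m{\Lambda}^l_*$ from the $\m{Q}$-subproblem optimality via outer semicontinuity of the limiting subdifferential, and the gradient identity from the $\m{z}$-subproblem. The only points left implicit are routine: the index bookkeeping along the convergent subsequence, and the fact that in the feasibility case $f$ is an indicator so the condition $\nabla f(\m{z}_*)=\cA^H\lbra{\m{\Lambda}^l_*}$ must be read as a subgradient inclusion restricted to the free variable $\m{t}$, which you already flag.
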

	
		Theorem \ref{thm:Convergence} shows that if the solution sequence produced by SMART converges, then it converges to a stationary point. Stronger convergence analyses in \cite{li2015global,wang2019global} are not applicable for SMART because the mappings $\cM$ and $\cA$ are not surjective in our problem. Good convergence of SMART will be confirmed via extensive numerical simulations in Section \ref{sec:sim}.
	
	\section{Extensions} \label{sec:extensions}
	In this section, we extend the proposed SMART including the Hankel-Toeplitz optimization model and the corresponding ADMM algorithm to the large number of sources case with $K\ge n$ and the SLA case.
	
	\subsection{The Case of $K\ge n$} \label{sec:K>=n}
	Recall that $n=\sbra{N+1}/2$. When $K \ge n$, we define an enlarged virtual dataset $\m{X}' \in \bC^{N'\times L}$ with $N' > N$ and $\m{X}^\star$ can be regarded as incomplete observations of $\m{X}'$. In particular, we have
	\equ{
		\m{X}'_{\mbra{N}} \triangleq \cP_{\mbra{N}}\sbra{\m{X}'} = \begin{bmatrix} \m{X}^\star \\ \m{0} \end{bmatrix}, \label{eq:consistent}
	}
	where $\mbra{N} = \lbra{1,\ldots,N}$. As in the case of $K<n$, we assume that $N'$ is odd and let $n' = \sbra{N'+1}/2$. Denote the variable $\m{X}\in \bC^{N'\times L}$. To guarantee that the $n'\times n'$ submatrices $\cT\m{t}$ and $\cH\m{X}_{:,l}$ in $\cM(\m{X}_{:,l},\m{t})$ are low-rank, we let
	\equ{
		N' \geq 2K+1. \label{eq:N'}
	}
	Moreover, we redefine the sets $\cS_0$ in \eqref{eq:S0}, $\cS_{\text{HT}}$ in \eqref{eq:S_1}, and $\cS'_{\text{HT}}$ in \eqref{eq:S_2} by changing the dimension from $N$ to $N'$.
	
	The condition in \eqref{eq:N'} means that $K< n'$, and thus $\cH \m{X}_{:,l} \in \bC^{n'\times n'}$ and $\cT\m{t} \in \bC^{n'\times n'}$ admit the Vandermonde decomposition as in \eqref{eq:H} and \eqref{eq:T_Vander}, respectively. Following the same steps as in the proof of Theorem \ref{thm:K<n}, we have $\cS_{\text{HT}} = \cS_0$.
	Since the rank equality constraints therein are difficult to be tackled, we relax $\cS_{\text{HT}}$ to $\cS'_{\text{HT}}$ and consider the following optimization problems
	\equ{
		\begin{split}
			\!\!\!\!\! \text{find}  \lbra{\m{X}, \m{t}}, \st
			& \cM(\m{X}_{:,l},\m{t})\in \bS^+_K, \; l=1,\ldots,L, \\
			& \m{X}_{\mbra{N}} = \m{X}'_{\mbra{N}},
		\end{split} \label{eq:p_K_n}
	} 
	in the absence of noise, and
	\equ{
		\begin{split}
			& \min_{\m{X},\m{t}} \frobn{ \m{Y}' - \m{X}_{\mbra{N} } }^2,  \st \cM(\m{X}_{:,l},\m{t})\in \bS^+_K, \; l=1,\ldots,L, \label{eq:p_K_n2}
		\end{split}
	}
	where $\m{Y}' = \begin{bmatrix} \m{Y} \\ \m{0} \end{bmatrix} \in \bC^{N'\times L}$ in the presence of noise.
	It has been shown in Section \ref{sec:prop} that the above relaxation is reasonably tight.
	
	As in the case of $K<n$ in Section \ref{sec:alg}, the ADMM algorithm can be similarly derived for the problem in \eqref{eq:p_K_n} by making the substitution $f\sbra{\m{X}} = \mathbb{I}_{\lbra{\m{0}}} \sbra{ \m{X}_{\mbra{N}} - \m{X}'_{\mbra{N}} }$ in \eqref{eq:general}. The only difference is the update of $\m{X}$:
	\equ{
		\sbra{X_{q,l}}_{j+1} = 
		\begin{cases}
			X_{q,l}^\star \quad & q = 1,\ldots,N,  \\
			\mbra{\m{D}_{\cH}^{-1}\cH^H\m{W}^l_{2}}_q & q = N+1,\ldots,N',
		\end{cases}
	}
	for $l=1,\ldots,L$.
	
	For the problem in \eqref{eq:p_K_n2}, we make the substitution $f\sbra{\m{X}} = \frobn{ \m{Y}' - \m{X}_{\mbra{N} } }^2$ in \eqref{eq:general}. The only difference also is the update of $\m{X}$:
	\equ{
		\sbra{\m{X}_{:,l}}_{j+1} =  \sbra{\diag\sbra{\cP_{\mbra{N}}\sbra{\m{1}}} + \rho \m{D}_{\cH}}^{-1} \sbra{\m{Y}'_{:,l} + \rho \cH^H\m{W}^l_2},
	}
	for $l=1,\ldots,L$, where $\m{1}$ is the $N' \times 1$ vector of ones.
	
	It has been shown in \cite{wax1992unique,williams1992resolving,valaee1994alternative} that the condition $K<2N-2$ is sufficient to uniquely identify the CM signal sources using a ULA, implying that $K$ can be greater than $N$ and it is possible to localize more than $N$ source signals from $N$ sensors. Since the proposed SMART makes full use of the CM and Vandermonde structures, it is expected that more than $N$ sources can be localized by solving the feasibility problem in \eqref{eq:p_K_n}, which will be verified numerically in Section \ref{sec:sim}.
	
	\subsection{The SLA Case} \label{sec:SLA}
	The proposed SMART can be easily generalized to the SLA case in \eqref{eq:SLA_data}. We consider the problem
	\equ{
		\begin{split}
			\!\!\!\!\! \text{find}  \lbra{\m{X}, \m{t}}, \st 
			& \cM(\m{X}_{:,l},\m{t})\in \bS^+_K, \; l=1,\ldots,L, \\
			& \m{X}_{ \Omega} = \m{X}^\star_{ \Omega},
		\end{split}
	} 
	in the noiseless case and the problem
	\equ{
		\begin{split}
			& \min_{\m{X},\m{t}} \frobn{\m{Y}_{\Omega} - \m{X}_{ \Omega}}^2, \st \cM(\m{X}_{:,l},\m{t})\in \bS^+_K, \; l=1,\ldots,L,
		\end{split}
	}
	in the noisy case. The ADMM algorithm can be derived similarly as in the case of ULA and the case of $K\ge n$ in the last subsection and we omit the details. 
	It is worthy noting that the case of $K\ge n$ can be regarded as a special SLA in which the number of complete sensors is $N'$ and the observation set $\Omega$ is $\mbra{N}$.	
	
	\begin{figure*}
		\centerline{\includegraphics[width=17cm]{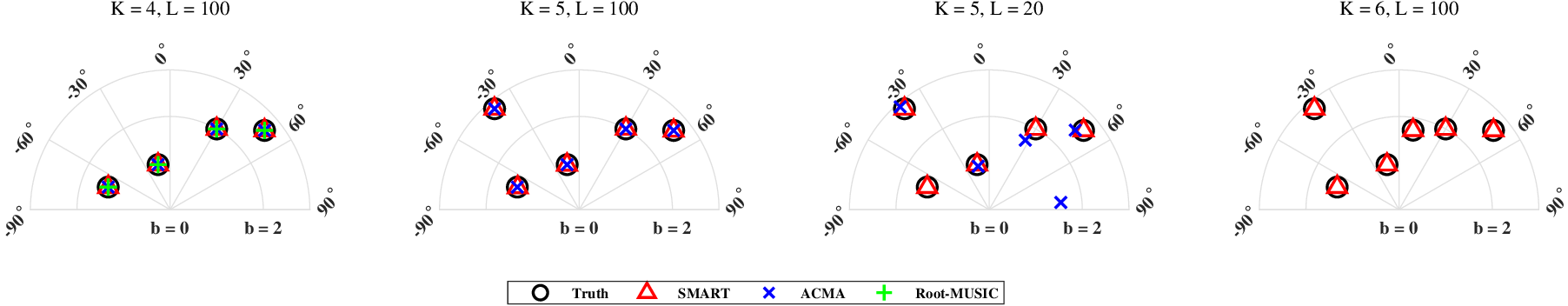}}
		\caption{Comparison of source localization between SMART, ACMA-based method and Root-MUSIC for a ULA with $N = 5$ elements.}
		\label{fig:ID_polar}
	\end{figure*}
	
	\section{Numerical Results} \label{sec:sim}
	
	\subsection{The Noiseless Case}	
	In this subsection, we present numerical results to illustrate the performance of the proposed SMART in the absence of noise. In ADMM, all matrix variables are initialized with zero, the penalty parameter $\rho$ is fixed on one and the algorithm is terminated if the maximum number of iterations $10^4$ is exhausted.
	
	\begin{figure}[htbp]
		\centerline{\includegraphics[width=11cm]{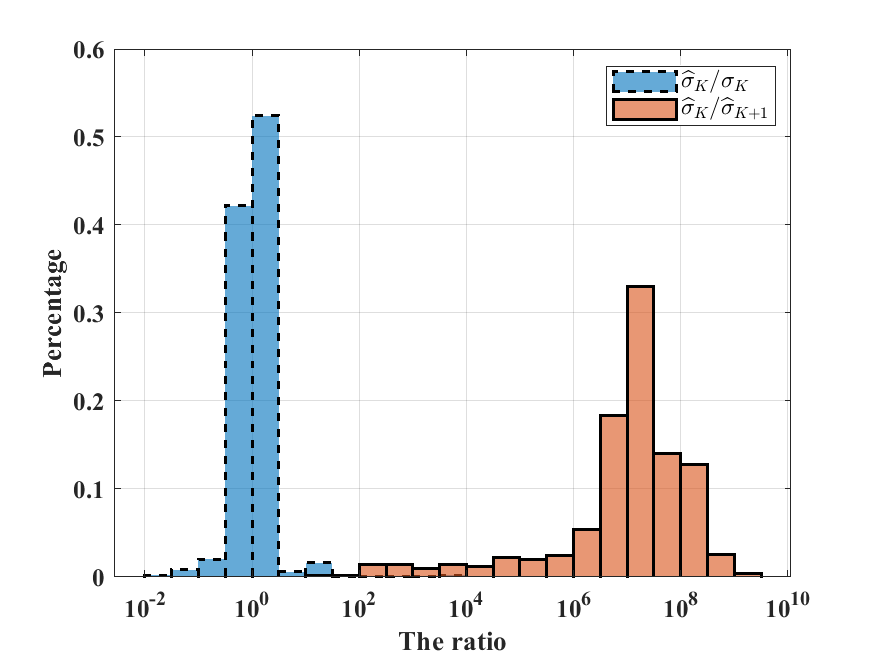}}
		\caption{Histogram of the ratio between singular values of the truth $\cT\m{t}$ and the solution $\cT\widehat{\m{t}}$ in SMART.}
		\label{fig:Histogram}
	\end{figure}
	
	\begin{figure*}[htb] 
		\centering
		\subfigure[] {\includegraphics[width=8cm]{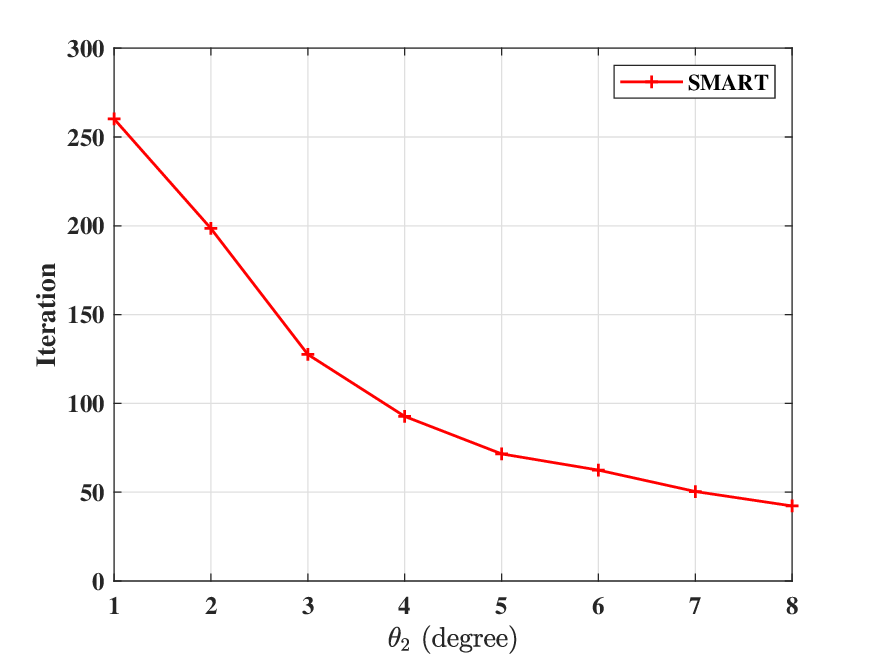}}
		\subfigure[] {\includegraphics[width=8cm]{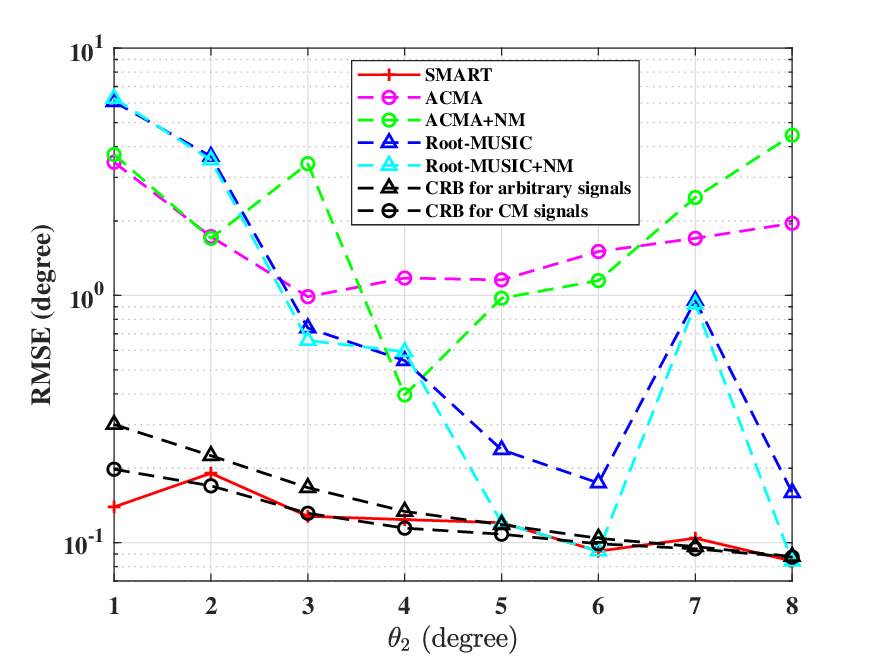}}
		
		\subfigure[] {\includegraphics[width=8cm]{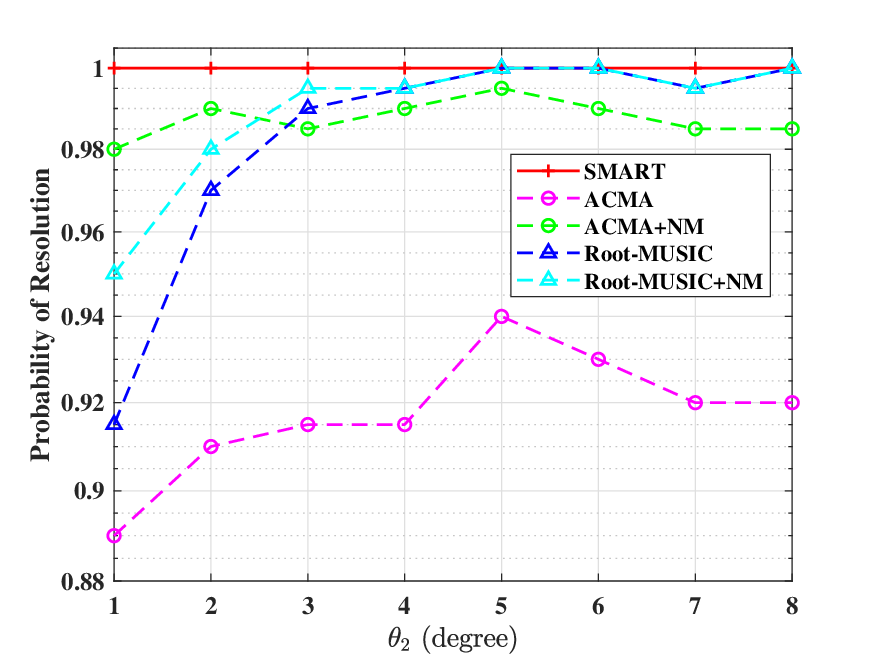}}
		\subfigure[] {\includegraphics[width=8cm]{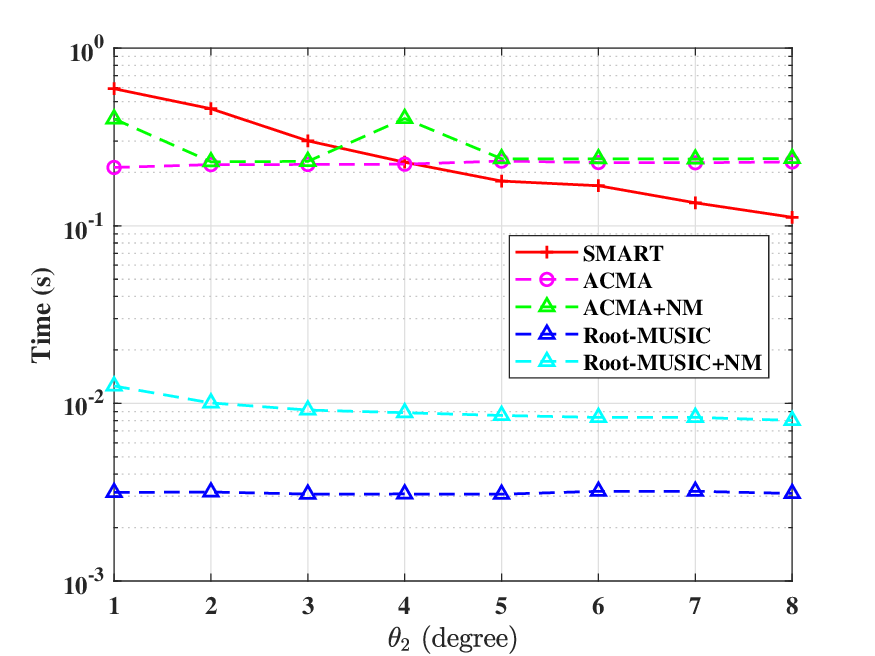}}
		\caption{(a) Iteration number; (b) RMSE in degree; (c) The probability of resolution; (d) The running time of algorithms versus the separation $\Delta_{\theta}$ for a ULA with $N = 15$ elements, SNR$= 20$dB and snapshots $L = 3$.}
		\label{fig:RMSEvsResolution}
	\end{figure*}
	
	\emph{Experiment 1: Source Localization.} We evaluate the performance of source localization of SMART via comparing it with ACMA \cite{leshem1999direction} and Root-MUSIC \cite{barabell1983improving}. A ULA with $N=5$ sensors is considered. A number of $K$ DOAs $\lbra{\theta_k}$ are chosen from the set $\lbra{-70^{\circ},-40^{\circ},-15^{\circ},10^{\circ},30^{\circ},50^{\circ}}$, and CM source signals are generated with powers $\lbra{b_k^2}$ chosen from the set $\lbra{2,8,1,3,4,7}$ and with random phases. We consider $K\in \set{4,5,6}$. Since $K>n=(N+1)/2=3$, we solve the problem in \eqref{eq:p_K_n} and select $N'=2K+1$ in SMART to satisfy the condition in \eqref{eq:N'}.
	As shown in Fig.~\ref{fig:ID_polar}, all methods succeed to estimate the DOA and the modulus of signal when the source number $K=4<N$ and the snapshot number $L=100$. As $K$ is greater than $4$, Root-MUSIC absolutely fails and is omitted since it does not utilize the CM structure and can localize at most $N-1=4$ sources. When $L$ decreases to $20$, ACMA fails since one of the sufficient conditions $K< \sqrt{L}$ of ACMA in \cite{van1996analytical} is not satisfied. When $K$ increases to $6$, ACMA is not applicable since its necessary condition $K\le N$ is not met. The proposed SMART converges and succeeds to localize sources even when $K=6>N$. 
	
	\subsection{The Noisy Case}
	In this subsection, we investigate the statistical efficiency of SMART in the presence of noise. The methods that we use for comparison include Root-MUSIC \cite{barabell1983improving}, ACMA \cite{van1996analytical,leshem1999direction}, and Newton's method (NM) \cite{leshem2000maximum}, which are representatives of the subspace methods (without considering the CM structure), the two-step methods, and the parameter-domain MLE methods, respectively, for CM DOA estimation. The CRBs for CM and arbitrary signals \cite{leshem1999direction,stoica1989music} are presented as benchmarks. For SMART, the penalty parameter $\rho$ in ADMM is initialized by $\rho_0=1/\sqrt{N+L}$ and is adaptively updated as in \cite[Section 3.4.1]{boyd2011distributed} to accelerate convergence. We initialize all matrix variables with zero. The ADMM will be terminated if a maximum number of $3000$ iterations are reached. SMART is implemented in MATLAB without using parallel computing. For ACMA, one-dimensional grid search with a uniform grid of size $10^{4}$ is used to estimate the DOAs from the estimated steering matrix as in \cite{leshem1999direction}. For NM, we consider two initializations including ACMA and Root-MUSIC as in \cite{leshem2000maximum}. The root mean square error (RMSE) is defined as $\sqrt{\frac{1}{P}\sum^{P}_{p=1}\norm{\widehat{\m{\theta}}_p-\m{\theta}}^2_2}$ where $\widehat{\m{\theta}}_p$ is the vector of estimated DOAs in the $p$-th trial. The signal-to-noise ratio (SNR) is defined as $10\log_{10}\sbra{\frobn{\m{X}^\star}^2/\frobn{\m{E}}^2}$.  
	
	\emph{Experiment 2: Tightness of Relaxation.} To show tightness of the relaxation from the problem in \eqref{eq:p_S1} to the problem in \eqref{eq:p_S2}, it suffices to show that the number of recovered sources, given by $\rank(\cT \widehat{\m{t}})$, equals the true source number $K$. We consider a ULA composed of $N=15$ sensors with the number of snapshots $L=5$. Then $K=3$ sources are randomly generated with random phase and the signal is contaminated by i.i.d. complex Gaussian noise, where we consider SNR$\in \set{0,10,\ldots,40}\text{dB}$. Let $\sigma_K$ and $\widehat{\sigma}_K$ be the $K$-th largest singular value of the truth $\cT\m{t}$ and the solution $\cT\widehat{\m{t}}$ of the relaxed problem in \eqref{eq:p_S2}, respectively. We try $100$ Monte Carlo trials for each SNR and compute the ratio $\widehat{\sigma}_K / \sigma_K$ and $\widehat{\sigma}_K / \widehat{\sigma}_{K+1}$ for each trial. The histogram plots are presented in Fig.~\ref{fig:Histogram}. It is seen that $\widehat{\sigma}_K$ and $\sigma_K$ share approximately the same order of magnitude, which means $\rank(\cT\widehat{\m{t}})\ge K$, and $\widehat{\sigma}_K$ is much greater than $\widehat{\sigma}_{K+1}$ overall (within numerical precision), which implies $\rank(\cT\widehat{\m{t}})\le K$. This validates our analysis in Section \ref{sec:prop}.
	
	\begin{figure}
		\centerline{\includegraphics[width=11cm]{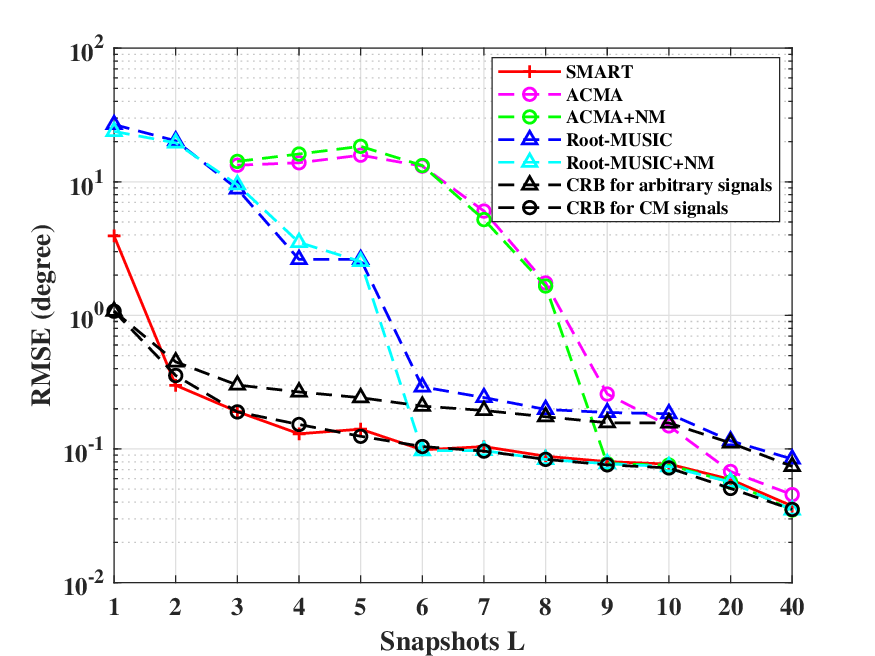}}
		\caption{RMSE in degree versus the number of snapshots $L$ for a ULA with $N=15$ elements, SNR $=20$dB and $\m{\theta} = [-2^{\circ},1^{\circ},30^{\circ}]^T$.}
		\label{fig:RMSEvsL}
	\end{figure}
	
	\emph{Experiment 3: RMSE vs. the Separation $\Delta_{\theta}$.} 
	For a ULA composed of $N=15$ sensors, we consider $K=2$ CM signals with random phases. The first source is fixed at $\theta_1 = -2^{\circ}$ and the second source is located at $\theta_2 = -2^{\circ}+\Delta_{\theta}$. We set the number of snapshots $L = 3$ and SNR $=20$dB. We say that the $K$ sources are successfully resolved if $\abs{\widehat{\theta}_k-\theta_k} < \min\lbra{ \abs{\theta_k - \theta_{k-1}}, \abs{\theta_{k+1} - \theta_{k}} } / 2$ for $k=1,\ldots,K$.
	A total of $P=200$ Monte Carlo trials are conducted and then averaged to produce each simulated point in the following figures.
	It is seen from Fig.~\ref{fig:RMSEvsResolution}(a) that SMART shows good convergence behaviour and the iteration number decreases as the increase of the separation of two DOAs. As shown in Fig.~\ref{fig:RMSEvsResolution}(b), the CRB for CM signals is lower than the CRB for arbitrary signals. It is shown in Fig.~\ref{fig:RMSEvsResolution}(b)-(c) that the performance of ACMA, Root-MUSIC and NMs initialized by them is poor. This is because ACMA and Root-MUSIC exploit either the Vandermonde structure or the CM structure only and NMs are highly sensitive to the initialization. The proposed SMART succeeds to resolve all sources in all trials and its RMSE attains or is slightly better than the CRB for CM signals (note that the maximum likelihood estimator is generally biased and can possibly be better than the CRB \cite{mardia1999bias}). 
	We also plot the curves of CPU computational time in Fig.~\ref{fig:RMSEvsResolution}(d). It can be seen that SMART shows comparable speed as ACMA and NM initialized by ACMA. 
	
	\emph{Experiment 4: RMSE vs. the Number of Snapshots $L$.} 
	We consider three CM signals with DOAs being $\m{\theta} = [-2^{\circ},1^{\circ},30^{\circ}]^T$. The SNR is $20$dB.
	It is seen from Fig.~\ref{fig:RMSEvsL} that Root-MUSIC can only reach the CRB for arbitrary signals since it does not use the CM structure. As a suboptimal two-step method, ACMA approaches the CRB for CM signals when the snapshots are sufficient, i.e., $L\ge 20$. As a parameter-domain MLE method, NMs initialized by ACMA and Root-MUSIC outperform ACMA and match the CRB for CM signals when $L\ge 9$ and $L\ge 6$, respectively. Thanks to the full exploitation of the Vandermonde and CM structures, SMART only requires $L\ge 2$ snapshots to attain the CRB for CM signals. 
	
	\begin{figure}
		\centerline{\includegraphics[width=11cm]{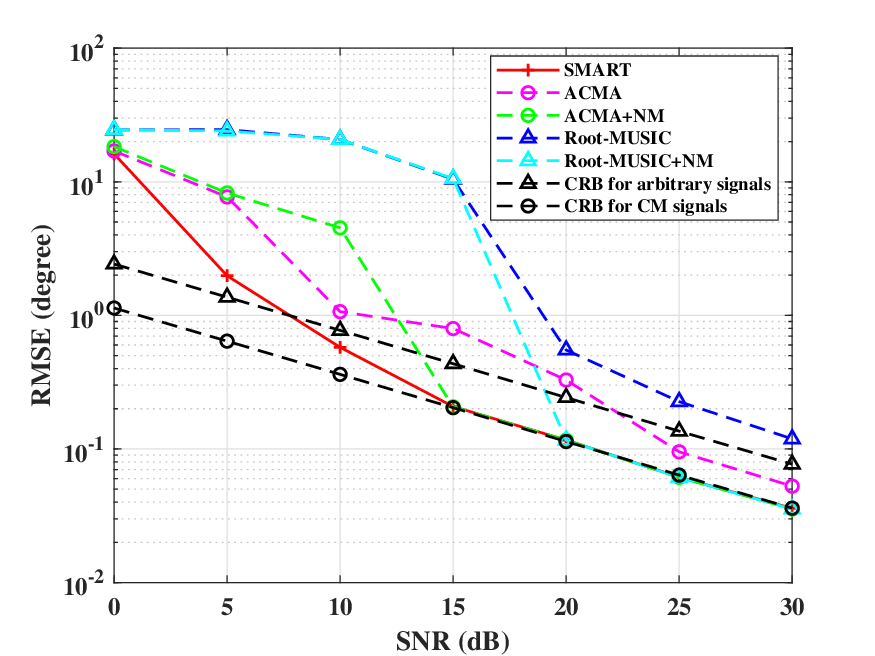}}
		\caption{RMSE in degree versus SNR for a ULA with $N=15$ elements, $L=3$ and $\m{\theta} = [-2^{\circ},1^{\circ}]^T$.}
		\label{fig:RMSEvsSNR}
	\end{figure}

	\emph{Experiment 5: RMSE vs. SNR.} 
	Fig.~\ref{fig:RMSEvsSNR} depicts the RMSE in degree versus SNR for a ULA with $N=15$ elements and $L=3$. We fix the DOAs $\m{\theta} = [-2^{\circ},1^{\circ}]^T$, the signal moduli $\m{b} = [1.79,2.62]^T$ and the signal phases $\m{\phi} = \begin{bmatrix} -2.4 & -1.52 & 1.90 \\ -1.45 & -1.58 & 1.22 \end{bmatrix}$. It is seen that ACMA performs better than Root-MUSIC and gets close to the CRB for CM signals under high SNR. The RMSEs of NMs initialized by Root-MUSIC and ACMA match the CRB for CM signals when SNR $\ge20$ dB and SNR $\ge15$ dB, respectively. SMART achieves the CRB for CM signals when SNR $\ge15$ dB and outperforms other methods. This verifies the statistical efficiency of the SMART estimator again. 
	
	\emph{Experiment 6: The SLA Case.} 
	We consider a SLA composed of $M=10$ sensors with $N=15$. In particular, we let the sensors index set $\Omega=\lbra{1,2,3,5,6,7,8,9,10,15}$ and $L=5$. We also fix the DOAs $\m{\theta} = [-2^{\circ},1^{\circ}]^T$, the signal moduli $\m{b} = [0.54,1.30]^T$ and the signal phases $\m{\phi} = \begin{bmatrix} -0.63 & 0.70 & 2.45 & 2.54 & 0.83 \\ 2.32 & -1.38 & 1.52 & 0.61 & -0.36 \end{bmatrix}$. Root-MUSIC and the associated NM are omitted since Root-MUSIC is not applicable in this case. 
	It is shown in Fig.~\ref{fig:RMSEvsSNR_M10} that as in the ULA case, SMART reaches the CRB for CM signals when SNR $\ge15$ dB and achieves the best performance. 
	
	\begin{figure}
		\centerline{\includegraphics[width=11cm]{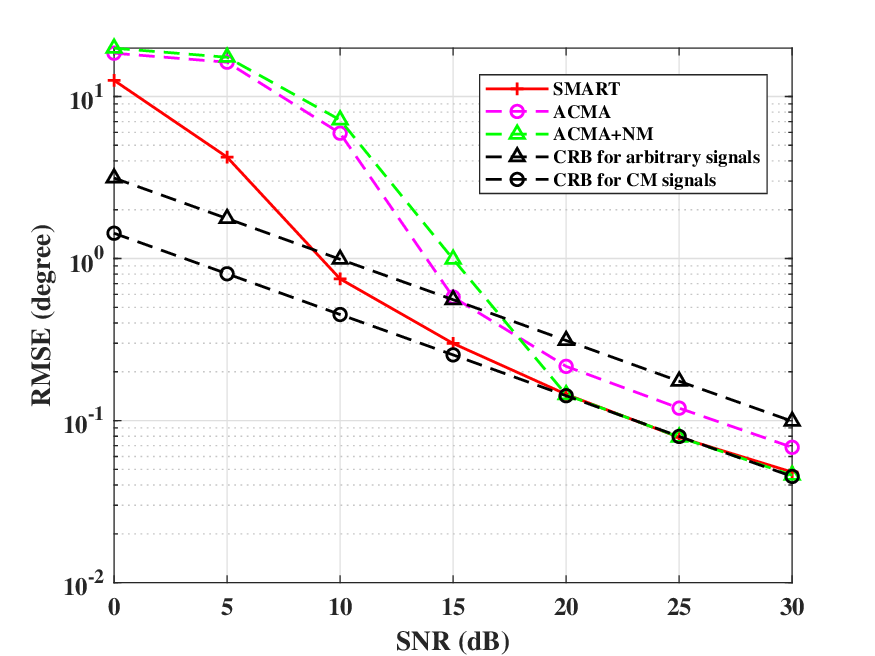}}
		\caption{RMSE in degree versus SNR for a SLA with $M=10$ elements, $L=5$ and $\m{\theta} = [-2^{\circ},1^{\circ}]^T$.}
		\label{fig:RMSEvsSNR_M10}
	\end{figure}
	
	\section{Conclusion} \label{sec:conclusion}
	In this paper, we proposed an SMART approach to make full use of the Vandermonde and CM structures for CM DOA estimation. By proposing a novel rank-constrained Hankel-Toeplitz characterization, the CM DOA estimation problem was formulated as a rank-constrained feasibility problem in the noiseless case and a rank-constrained minimization problem in the noisy case. Both of them were further generalized to the SLA case and were effectively solved by the ADMM algorithm. It is demonstrated by extensive numerical results that the proposed approaches have good performance in terms of source localization and statistical efficiency as compared to state-of-the-art algorithms. 
	
	The ADMM algorithm adopted in the proposed SMART needs to compute a truncated eigen-decomposition in each iteration, which limits the computational speed. It is of interest in further studies to develop faster nonconvex algorithms based on recent progress on low-rank matrix recovery \cite{cai2018exploiting}. Besides, since SMART requires a priori the number of sources, incorporating model order selection techniques \cite{stoica2004model} into SMART is also another future direction. Notice that under the assumption of uncorrelated sources, more sources than sensors can be localized using a SLA; see \cite{yang2022robust} and references therein. It is interesting to investigate whether the use of both CM and uncorrelatedness priors leads to even more locatable sources.
	
	\appendix
	\subsection{Proof of Proposition \ref{prop:1}} \label{append:1}
	\begin{proof}
		When $K=1$, it follows from \eqref{eq:range} that we must have $\rank\sbra{\cT\m{t}} = \rank\sbra{\cM(\m{X}_{:,l},\m{t})}=1$, and then $\cS'_{\text{HT}} = \cS_{\text{HT}} = \cS_0$. We next consider the case $K>1$. According to the relationship between $\rank\sbra{\cT\m{t}}$ and $\rank\sbra{\cM(\m{X}_{:,l},\m{t})}$, we divide $ \cS'_{\text{HT}}$ into two sets as follows:
		\equ{
			\cS'_{\text{HT}} = \cS_{\text{HT}} \cup \cS''_{\text{HT}}, \label{eq:S'HT}
		}
		where
		\equ{
			\begin{split}
				\cS''_{\text{HT}} & = \lbra{{\m{X}: \ \rank\sbra{\cT\m{t}} = K' < \rank\sbra{\cM(\m{X}_{:,l},\m{t})},  } \right. \\
					& \qquad \quad \ \left.  
					{ \cM(\m{X}_{:,l},\m{t}) \in \bS^+_K, l=1,\ldots,L,\;\m{t} \in \bC^{n}} }.
			\end{split}
		}
		
		For $\m{X} \in \cS''_{\text{HT}}$, it can be shown by arguments similar to those in the proof of Theorem \ref{thm:K<n} that there exist distinct $\lbra{\theta_k}^{K'}_{k=1}$, $\lbra{b_k > 0}^{K'}_{k=1}$ and $\lbra{S_{k,l}\in \bC}^{K',L}_{k=1,l=1}$ such that 
		\equ{
			\cT\m{t} = \m{A}\m{B}\m{A}^H \ \text{and} \ \cH\m{X}_{:,l}=\m{A}\bS^l\m{A}^T, l=1,\ldots,L, \label{eq:TtHx}
		}
		where $\m{A}$ here is an $n \times K'$ Vandermonde matrix, $\m{B}=\diag\sbra{\m{b}} \in \bR^{K' \times K'}$ with $\lbra{b_k > 0}$ and $\m{\bS}^l=\diag\sbra{\m{S}_{:,l}} \in \bC^{K' \times K'}$. Then for each $l=1,\ldots,L$, we have the decomposition \eqref{eq:decom_2}.
		Denote $\m{C}^l = \begin{bmatrix} \m{B} & \bS^l \\ \overline{\bS^l} & \m{B} \end{bmatrix}$. Using again the PSDness constraints in \eqref{eq:S_2}, we arrive at $ \m{C}^l \ge \m{0}$, and thus the Schur complement 
		\equ{
			\m{B} - \overline{\bS^l} \m{B}^{-1} \bS^l \ge \m{0},
		}
		or equivalently,
		\equ{
			b_k \ge \abs{S_{k,l}}, \ k=1,\ldots,K'. \label{eq:ge}
		}
		Further, since $\rank\sbra{\cM(\m{X}_{:,l},\m{t})} \le K$, we have 
		\equ{
			\rank \sbra{\m{C}^l} \le K. \label{eq:C}
		} 
		Given $l$, we denote two subsets $\mathscr{K}_1^l $ and $\mathscr{K}_2^l$ of the set $\lbra{1,\ldots,K'}$ as follows:
		\equ{
			\mathscr{K}_1^l  = \lbra{ k: \abs{S_{k,l}} < b_k }  \ \text{and} \
			\mathscr{K}_2^l  = \lbra{ k: \abs{S_{k,l}} = b_k }.
		}
		It follows from \eqref{eq:ge} and \eqref{eq:C} that
		\equ{
			\abs{\mathscr{K}_1^l}+\abs{\mathscr{K}_2^l}=K' \quad \text{and} \quad		2\abs{\mathscr{K}_1^l}+\abs{\mathscr{K}_2^l} \le K. \label{eq:num}
		}
		In other words, for each $k\in \mathscr{K}_1^l \sbra{\text{or} \; \mathscr{K}_2^l}$, the $k$-th and $(k+K)$-th columns of $\m{C}^l$ are linearly independent (or dependent), which contributes the rank of $\m{C}^l$ by two (or one). It follows from \eqref{eq:num} that 
		\equ{
			\abs{\mathscr{K}^l_1} \le \min\lbra{K', K - K'}. \label{eq:num2}
		}Using \eqref{eq:TtHx}, we have $\m{X}\in \cS''_0$ with $\Psi_{k,l} = S_{k,l} / b_k$ and then $\cS''_{\text{HT}} \subset \cS''_0$. 
		
		Conversely, for any $\m{X}\in \cS''_0$, we can easily verify that $\m{X}\in \cS''_{\text{HT}}$ by arguments similar to those in the proof of Theorem \ref{thm:K<n}. Therefore, we have 
		\equ{
			\cS''_{\text{HT}} = \cS''_0. \label{eq:''}
		}
		
		Combing \eqref{eq:S'HT}, \eqref{eq:''} and Theorem \ref{thm:K<n}, we conclude  
		\equ{
			\cS'_{\text{HT}} = \cS'_0,
		}
		completing the proof.
	\end{proof}
	
	\subsection{Derivations of \eqref{eq:t_HTs} and \eqref{eq:x_HT}} \label{append:2}
	
	\subsubsection{Derivation of \eqref{eq:t_HTs}} 
	According to the definition of $\cM^l=\cM(\m{X}_{:,l},\m{t})$ in \eqref{eq:HT2}, the problem regarding $\m{t}$ in \eqref{eq:2} is rewritten as
	\equ{
		\begin{split}
			\m{t}_{j+1} 
			& = \argmin_{\m{t}} \sum^L_{l=1} \frobn{\cM^l - \m{W}^l}^2 \\
			& = \argmin_{\m{t}} \sum^L_{l=1} \sbra{ \frobn{ \cT\overline{\m{t}} - \m{W}^l_1}^2 + \frobn{ \cT\m{t} - \m{W}^l_3}^2 }.
		\end{split} \label{eq:pt}
	}
	For an $n\times n$ Hermitian matrix $\m{C}$, it follows from the definition of the Hermitian Toeplitz operator $\cT$ in Section \ref{sec:org} that the optimal solution of $\min_{\m{t}} \frobn{\cT\m{t}-\m{C}}^2$ is given by $\m{t} = \m{D}_{\cT}^{-1}\cT^H\m{C}$. Consequently, equating the derivative of the objective function in \eqref{eq:pt} with respect to $\m{t}$ to zero yields the solution in \eqref{eq:t_HTs}. 
	
	\subsubsection{Derivation of \eqref{eq:x_HT}}
	By substituting \eqref{eq:loss} into the \eqref{eq:2}, the problem regarding $\m{X}$ is given by
	\equ{
		\begin{split}
			\m{X}_{j+1} 
			& = \argmin_{\m{X}} \frobn{\m{X}-\m{Y}}^2 \\
			& \quad + \frac{\rho}{2} \sum^L_{l=1} \sbra{ \frobn{\cH\m{X}_{:,l}-\m{W}^l_2}^2 + \frobn{\cH\overline{\m{X}_{:,l}}-\sbra{\m{W}^l_2}^H}^2 } \\
			& = \argmin_{\m{X}} \frobn{\m{X}-\m{Y}}^2 + \rho \sum^L_{l=1} \frobn{\cH\m{X}_{:,l}-\m{W}^l_2}^2, 
		\end{split} \label{eq:pX}
	}
	where the second equality follows from the symmetry of Hankel matrices $\cH\m{X}_{:,l}\in \bC^{n\times n}$. Since the problem in \eqref{eq:pX} is separable for each $l$, the problem of the $l$-th snapshot of $\m{X}$ is given by
	\equ{
		\begin{split}
			\sbra{\m{X}_{:,l}}_{j+1} = \argmin_{\m{X}_{:,l}} \frobn{\m{X}_{:,l}-\m{Y}_{:,l}}^2 + \rho \frobn{\cH\m{X}_{:,l}-\m{W}^l_2}^2.
		\end{split}
	}
	Applying the definition of Hankel operator in Section \ref{sec:org} and equating the derivative of the objective function with respect to $\m{X}_{:,l}$ to zero yield
	\equ{
		\begin{split}
			2\sbra{ \m{X}_{:,l}-\m{Y}_{:,l} } + 2\rho \sbra{ \m{D}_{\cH}\m{X}_{:,l} - \cH^H\m{W}^l_2 }  = \m{0},
		\end{split}
	}
	resulting in \eqref{eq:x_HT}.
	
	\bibliographystyle{IEEEtran}
	\bibliography{SMART.bib}

\begin{thebibliography}{10}
\providecommand{\url}[1]{#1}
\csname url@samestyle\endcsname
\providecommand{\newblock}{\relax}
\providecommand{\bibinfo}[2]{#2}
\providecommand{\BIBentrySTDinterwordspacing}{\spaceskip=0pt\relax}
\providecommand{\BIBentryALTinterwordstretchfactor}{4}
\providecommand{\BIBentryALTinterwordspacing}{\spaceskip=\fontdimen2\font plus
\BIBentryALTinterwordstretchfactor\fontdimen3\font minus
  \fontdimen4\font\relax}
\providecommand{\BIBforeignlanguage}[2]{{%
\expandafter\ifx\csname l@#1\endcsname\relax
\typeout{** WARNING: IEEEtran.bst: No hyphenation pattern has been}%
\typeout{** loaded for the language `#1'. Using the pattern for}%
\typeout{** the default language instead.}%
\else
\language=\csname l@#1\endcsname
\fi
#2}}
\providecommand{\BIBdecl}{\relax}
\BIBdecl

\bibitem{zheng2017adaptive}
L.~Zheng, M.~Lops, and X.~Wang, ``Adaptive interference removal for
  uncoordinated radar/communication coexistence,'' \emph{IEEE Journal of
  Selected Topics in Signal Processing}, vol.~12, no.~1, pp. 45--60, 2017.

\bibitem{knight1981digital}
W.~C. Knight, R.~G. Pridham, and S.~M. Kay, ``Digital signal processing for
  sonar,'' \emph{Proceedings of the IEEE}, vol.~69, no.~11, pp. 1451--1506,
  1981.

\bibitem{guo2017millimeter}
Z.~Guo, X.~Wang, and W.~Heng, ``Millimeter-wave channel estimation based on
  {2-D beamspace MUSIC} method,'' \emph{IEEE Transactions on Wireless
  Communications}, vol.~16, no.~8, pp. 5384--5394, 2017.

\bibitem{tsai2018millimeter}
Y.~Tsai, L.~Zheng, and X.~Wang, ``Millimeter-wave beamformed full-dimensional
  {MIMO} channel estimation based on atomic norm minimization,'' \emph{IEEE
  Transactions on Communications}, vol.~66, no.~12, pp. 6150--6163, 2018.

\bibitem{shafin2016doa}
R.~Shafin, L.~Liu, J.~Zhang, and Y.-C. Wu, ``{DoA estimation and capacity
  analysis for 3-D millimeter wave massive-MIMO/FD-MIMO OFDM systems},''
  \emph{IEEE Transactions on Wireless Communications}, vol.~15, no.~10, pp.
  6963--6978, 2016.

\bibitem{wei2018multi}
Z.~Wei, L.~Zhao, J.~Guo, D.~W.~K. Ng, and J.~Yuan, ``Multi-beam {NOMA} for
  hybrid mmwave systems,'' \emph{IEEE Transactions on Communications}, vol.~67,
  no.~2, pp. 1705--1719, 2018.

\bibitem{wei2019noma}
Z.~Wei, D.~W.~K. Ng, and J.~Yuan, ``{NOMA for hybrid mmWave} communication
  systems with beamwidth control,'' \emph{IEEE Journal of Selected Topics in
  Signal Processing}, vol.~13, no.~3, pp. 567--583, 2019.

\bibitem{zhang2021overview}
J.~A. Zhang, F.~Liu, C.~Masouros, R.~W. Heath, Z.~Feng, L.~Zheng, and
  A.~Petropulu, ``An overview of signal processing techniques for joint
  communication and radar sensing,'' \emph{IEEE Journal of Selected Topics in
  Signal Processing}, vol.~15, no.~6, pp. 1295--1315, 2021.

\bibitem{liu2022survey}
A.~Liu, Z.~Huang, M.~Li, Y.~Wan, W.~Li, T.~X. Han, C.~Liu, R.~Du, D.~K.~P. Tan,
  J.~Lu \emph{et~al.}, ``A survey on fundamental limits of integrated sensing
  and communication,'' \emph{IEEE Communications Surveys \& Tutorials},
  vol.~24, no.~2, pp. 994--1034, 2022.

\bibitem{mancuso2011reducing}
V.~Mancuso and S.~Alouf, ``Reducing costs and pollution in cellular networks,''
  \emph{IEEE Communications Magazine}, vol.~49, no.~8, pp. 63--71, 2011.

\bibitem{liu2018toward}
F.~Liu, L.~Zhou, C.~Masouros, A.~Li, W.~Luo, and A.~Petropulu, ``Toward
  dual-functional radar-communication systems: Optimal waveform design,''
  \emph{IEEE Transactions on Signal Processing}, vol.~66, no.~16, pp.
  4264--4279, 2018.

\bibitem{godard1980self}
D.~Godard, ``Self-recovering equalization and carrier tracking in
  two-dimensional data communication systems,'' \emph{IEEE Transactions on
  Communications}, vol.~28, no.~11, pp. 1867--1875, 1980.

\bibitem{treichler1983new}
J.~Treichler and B.~Agee, ``A new approach to multipath correction of constant
  modulus signals,'' \emph{IEEE Transactions on Acoustics, Speech, and Signal
  Processing}, vol.~31, no.~2, pp. 459--472, 1983.

\bibitem{mariere2003blind}
B.~Mariere, Z.-Q. Luo, and T.~N. Davidson, ``Blind constant modulus
  equalization via convex optimization,'' \emph{IEEE Transactions on Signal
  Processing}, vol.~51, no.~3, pp. 805--818, 2003.

\bibitem{treichler1985new}
J.~Treichler and M.~Larimore, ``New processing techniques based on the constant
  modulus adaptive algorithm,'' \emph{IEEE Transactions on Acoustics, Speech,
  and Signal Processing}, vol.~33, no.~2, pp. 420--431, 1985.

\bibitem{van1996analytical}
A.-J. Van Der~Veen and A.~Paulraj, ``An analytical constant modulus
  algorithm,'' \emph{IEEE Transactions on Signal Processing}, vol.~44, no.~5,
  pp. 1136--1155, 1996.

\bibitem{wu2019intelligent}
Q.~Wu and R.~Zhang, ``Intelligent reflecting surface enhanced wireless network
  via joint active and passive beamforming,'' \emph{IEEE Transactions on
  Wireless Communications}, vol.~18, no.~11, pp. 5394--5409, 2019.

\bibitem{wei2020sum}
Z.~Wei, Y.~Cai, Z.~Sun, D.~W.~K. Ng, J.~Yuan, M.~Zhou, and L.~Sun, ``Sum-rate
  maximization for {IRS-assisted UAV OFDMA} communication systems,'' \emph{IEEE
  Transactions on Wireless Communications}, vol.~20, no.~4, pp. 2530--2550,
  2020.

\bibitem{he2021novel}
X.~He, L.~Huang, and J.~Wang, ``Novel relax-and-retract algorithm for
  intelligent reflecting surface design,'' \emph{IEEE Transactions on Vehicular
  Technology}, vol.~70, no.~2, pp. 1995--2000, 2021.

\bibitem{hassanien2019dual}
A.~Hassanien, M.~G. Amin, E.~Aboutanios, and B.~Himed, ``Dual-function radar
  communication systems: A solution to the spectrum congestion problem,''
  \emph{IEEE Signal Processing Magazine}, vol.~36, no.~5, pp. 115--126, 2019.

\bibitem{wax1992unique}
M.~Wax, ``On unique localization of constrained-signals sources,'' \emph{IEEE
  Transactions on Signal Processing}, vol.~40, no.~6, pp. 1542--1547, 1992.

\bibitem{williams1992resolving}
D.~B. Williams and D.~H. Johnson, ``On resolving {$2M-1$} narrow-band signals
  with an {$M$} sensor uniform linear array,'' \emph{IEEE Transactions on
  Signal Processing}, vol.~40, no.~3, pp. 707--711, 1992.

\bibitem{valaee1994alternative}
S.~Valaee and P.~Kabal, ``Alternative proofs for `on unique localization of
  constrained-signal sources','' \emph{IEEE Transactions on Signal Processing},
  vol.~42, no.~12, pp. 3547--3549, 1994.

\bibitem{leshem1999direction}
A.~Leshem and A.-J. Van~der Veen, ``Direction-of-arrival estimation for
  constant modulus signals,'' \emph{IEEE Transactions on Signal Processing},
  vol.~47, no.~11, pp. 3125--3129, 1999.

\bibitem{pisarenko1973retrieval}
V.~F. Pisarenko, ``The retrieval of harmonics from a covariance function,''
  \emph{Geophysical Journal International}, vol.~33, no.~3, pp. 347--366, 1973.

\bibitem{bresler1986number}
Y.~Bresler and A.~Macovski, ``On the number of signals resolvable by a uniform
  linear array,'' \emph{IEEE Transactions on Acoustics, Speech, and Signal
  Processing}, vol.~34, no.~6, pp. 1361--1375, 1986.

\bibitem{gooch1986cm}
R.~Gooch and J.~Lundell, ``The {CM} array: An adaptive beamformer for constant
  modulus signals,'' in \emph{ICASSP'86. IEEE International Conference on
  Acoustics, Speech, and Signal Processing}, vol.~11.\hskip 1em plus 0.5em
  minus 0.4em\relax IEEE, 1986, pp. 2523--2526.

\bibitem{shynk1996constant}
J.~J. Shynk and R.~P. Gooch, ``The constant modulus array for cochannel signal
  copy and direction finding,'' \emph{IEEE Transactions on Signal Processing},
  vol.~44, no.~3, pp. 652--660, 1996.

\bibitem{van2001asymptotic}
A.-J. Van Der~Veen, ``Asymptotic properties of the algebraic constant modulus
  algorithm,'' \emph{IEEE Transactions on Signal Processing}, vol.~49, no.~8,
  pp. 1796--1807, 2001.

\bibitem{leshem2000maximum}
A.~Leshem, ``Maximum likelihood separation of constant modulus signals,''
  \emph{IEEE Transactions on Signal Processing}, vol.~48, no.~10, pp.
  2948--2952, 2000.

\bibitem{stoica2000maximum}
P.~Stoica and O.~Besson, ``Maximum likelihood {DOA} estimation for
  constant-modulus signal,'' \emph{Electronics Letters}, vol.~36, no.~9, pp.
  849--851, 2000.

\bibitem{stoica2005spectral}
P.~Stoica and R.~L. Moses, \emph{Spectral analysis of signals}.\hskip 1em plus
  0.5em minus 0.4em\relax Upper Saddle River, NJ: Prentice-Hall, 2005.

\bibitem{tang2013compressed}
G.~Tang, B.~N. Bhaskar, P.~Shah, and B.~Recht, ``Compressed sensing off the
  grid,'' \emph{IEEE Transactions on Information Theory}, vol.~59, no.~11, pp.
  7465--7490, 2013.

\bibitem{wu2022maximum}
X.~Wu, Z.~Yang, P.~Stoica, and Z.~Xu, ``Maximum likelihood line spectral
  estimation in the signal domain: A rank-constrained structured matrix
  recovery approach,'' \emph{IEEE Transactions on Signal Processing}, vol.~70,
  pp. 4156--4169, 2022.

\bibitem{ottersten1993exact}
B.~Ottersten, M.~Viberg, P.~Stoica, and A.~Nehorai, ``Exact and large sample
  maximum likelihood techniques for parameter estimation and detection in array
  processing,'' in \emph{Radar array processing}.\hskip 1em plus 0.5em minus
  0.4em\relax Springer, 1993, pp. 99--151.

\bibitem{krim1996two}
H.~Krim and M.~Viberg, ``Two decades of array signal processing research: the
  parametric approach,'' \emph{IEEE Signal Processing Magazine}, vol.~13,
  no.~4, pp. 67--94, 1996.

\bibitem{yang2018sparse}
Z.~Yang, J.~Li, P.~Stoica, and L.~Xie, ``Sparse methods for
  direction-of-arrival estimation,'' in \emph{Academic Press Library in Signal
  Processing, Volume 7}.\hskip 1em plus 0.5em minus 0.4em\relax Elsevier, 2018,
  pp. 509--581.

\bibitem{starer1992newton}
D.~Starer and A.~Nehorai, ``Newton algorithms for conditional and unconditional
  maximum likelihood estimation of the parameters of exponential signals in
  noise,'' \emph{IEEE Transactions on Signal Processing}, vol.~40, no.~6, pp.
  1528--1534, 1992.

\bibitem{feder1988parameter}
M.~Feder and E.~Weinstein, ``Parameter estimation of superimposed signals using
  the {EM} algorithm,'' \emph{IEEE Transactions on Acoustics, Speech, and
  Signal Processing}, vol.~36, no.~4, pp. 477--489, 1988.

\bibitem{stoica1990maximum}
P.~Stoica and K.~C. Sharman, ``Maximum likelihood methods for
  direction-of-arrival estimation,'' \emph{IEEE Transactions on Acoustics,
  Speech, and Signal Processing}, vol.~38, no.~7, pp. 1132--1143, 1990.

\bibitem{schmidt1986multiple}
R.~Schmidt, ``Multiple emitter location and signal parameter estimation,''
  \emph{IEEE Transactions on Antennas and Propagation}, vol.~34, no.~3, pp.
  276--280, 1986.

\bibitem{barabell1983improving}
A.~Barabell, ``Improving the resolution performance of eigenstructure-based
  direction-finding algorithms,'' in \emph{IEEE International Conference on
  Acoustics, Speech, and Signal Processing}, vol.~8, 1983, pp. 336--339.

\bibitem{roy1989esprit}
R.~Roy and T.~Kailath, ``{ESPRIT}-estimation of signal parameters via
  rotational invariance techniques,'' \emph{IEEE Transactions on Acoustics,
  Speech, and Signal Processing}, vol.~37, no.~7, pp. 984--995, 1989.

\bibitem{malioutov2005sparse}
D.~Malioutov, M.~Cetin, and A.~S. Willsky, ``A sparse signal reconstruction
  perspective for source localization with sensor arrays,'' \emph{IEEE
  Transactions on Signal Processing}, vol.~53, no.~8, pp. 3010--3022, 2005.

\bibitem{hu2013doa}
N.~Hu, Z.~Ye, X.~Xu, and M.~Bao, ``{DOA} estimation for sparse array via sparse
  signal reconstruction,'' \emph{IEEE Transactions on Aerospace and Electronic
  Systems}, vol.~49, no.~2, pp. 760--773, 2013.

\bibitem{stoica2010spice}
P.~Stoica, P.~Babu, and J.~Li, ``{SPICE}: A sparse covariance-based estimation
  method for array processing,'' \emph{IEEE Transactions on Signal Processing},
  vol.~59, no.~2, pp. 629--638, 2010.

\bibitem{yang2014discretization}
Z.~Yang, L.~Xie, and C.~Zhang, ``A discretization-free sparse and parametric
  approach for linear array signal processing,'' \emph{IEEE Transactions on
  Signal Processing}, vol.~62, no.~19, pp. 4959--4973, 2014.

\bibitem{wu2017toeplitz}
X.~Wu, W.-P. Zhu, and J.~Yan, ``A {Toeplitz} covariance matrix reconstruction
  approach for direction-of-arrival estimation,'' \emph{IEEE Transactions on
  Vehicular Technology}, vol.~66, no.~9, pp. 8223--8237, 2017.

\bibitem{yang2018sample}
Z.~Yang, J.~Tang, Y.~C. Eldar, and L.~Xie, ``On the sample complexity of
  multichannel frequency estimation via convex optimization,'' \emph{IEEE
  Transactions on Information Theory}, vol.~65, no.~4, pp. 2302--2315, 2019.

\bibitem{yang2022robust}
Z.~Yang, X.~Chen, and X.~Wu, ``A robust and statistically efficient
  maximum-likelihood method for {DOA} estimation using sparse linear arrays,''
  \emph{arXiv preprint arXiv:2203.13433}, 2022.

\bibitem{amar2010low}
A.~Amar, A.~Leshem, and A.-J. van~der Veen, ``A low complexity blind estimator
  of narrowband polynomial phase signals,'' \emph{IEEE Transactions on Signal
  Processing}, vol.~58, no.~9, pp. 4674--4683, 2010.

\bibitem{hamici2019elements}
Z.~Hamici, ``{Elements failure robust compensation in 2D phased arrays for DOA
  estimation with M-ary PSK signals},'' \emph{IEEE Transactions on Wireless
  Communications}, vol.~19, no.~2, pp. 1347--1356, 2020.

\bibitem{smith1999optimum}
S.~T. Smith, ``Optimum phase-only adaptive nulling,'' \emph{IEEE Transactions
  on Signal Processing}, vol.~47, no.~7, pp. 1835--1843, 1999.

\bibitem{wang2012design}
Y.-C. Wang, X.~Wang, H.~Liu, and Z.-Q. Luo, ``On the design of constant modulus
  probing signals for {MIMO} radar,'' \emph{IEEE Transactions on Signal
  Processing}, vol.~60, no.~8, pp. 4432--4438, 2012.

\bibitem{luo2010semidefinite}
Z.-Q. Luo, W.-K. Ma, A.~M.-C. So, Y.~Ye, and S.~Zhang, ``Semidefinite
  relaxation of quadratic optimization problems,'' \emph{IEEE Signal Processing
  Magazine}, vol.~27, no.~3, pp. 20--34, 2010.

\bibitem{soltanalian2014designing}
M.~Soltanalian and P.~Stoica, ``Designing unimodular codes via quadratic
  optimization,'' \emph{IEEE Transactions on Signal Processing}, vol.~62,
  no.~5, pp. 1221--1234, 2014.

\bibitem{tranter2017fast}
J.~Tranter, N.~D. Sidiropoulos, X.~Fu, and A.~Swami, ``Fast unit-modulus least
  squares with applications in beamforming,'' \emph{IEEE Transactions on Signal
  Processing}, vol.~65, no.~11, pp. 2875--2887, 2017.

\bibitem{aldayel2016successive}
O.~Aldayel, V.~Monga, and M.~Rangaswamy, ``Successive {QCQP} refinement for
  {MIMO} radar waveform design under practical constraints,'' \emph{IEEE
  Transactions on Signal Processing}, vol.~64, no.~14, pp. 3760--3774, 2016.

\bibitem{fan2018constant}
W.~Fan, J.~Liang, and J.~Li, ``Constant modulus {MIMO} radar waveform design
  with minimum peak sidelobe transmit beampattern,'' \emph{IEEE Transactions on
  Signal Processing}, vol.~66, no.~16, pp. 4207--4222, 2018.

\bibitem{yang2022nonasymptotic}
Z.~Yang, ``{Nonasymptotic performance analysis of ESPRIT and spatial-smoothing
  ESPRIT},'' \emph{IEEE Transactions on Information Theory}, vol.~69, no.~1,
  pp. 666--681, 2022.

\bibitem{stoica2004model}
P.~Stoica and Y.~Selen, ``Model-order selection: a review of information
  criterion rules,'' \emph{IEEE Signal Processing Magazine}, vol.~21, no.~4,
  pp. 36--47, 2004.

\bibitem{andersson2014new}
F.~Andersson, M.~Carlsson, J.-Y. Tourneret, and H.~Wendt, ``A new frequency
  estimation method for equally and unequally spaced data,'' \emph{IEEE
  Transactions on Signal Processing}, vol.~62, no.~21, pp. 5761--5774, 2014.

\bibitem{boyd2011distributed}
S.~Boyd, N.~Parikh, E.~Chu, B.~Peleato, and J.~Eckstein, ``Distributed
  optimization and statistical learning via the alternating direction method of
  multipliers,'' \emph{Foundations and Trends in Machine Learning}, vol.~3,
  no.~1, pp. 1--122, 2011.

\bibitem{diamond2018general}
S.~Diamond, R.~Takapoui, and S.~Boyd, ``A general system for heuristic
  minimization of convex functions over non-convex sets,'' \emph{Optimization
  Methods and Software}, vol.~33, no.~1, pp. 165--193, 2018.

\bibitem{xu2012alternating}
Y.~Xu, W.~Yin, Z.~Wen, and Y.~Zhang, ``An alternating direction algorithm for
  matrix completion with nonnegative factors,'' \emph{Frontiers of Mathematics
  in China}, vol.~7, no.~2, pp. 365--384, 2012.

\bibitem{jiang2014alternating}
B.~Jiang, S.~Ma, and S.~Zhang, ``Alternating direction method of multipliers
  for real and complex polynomial optimization models,'' \emph{Optimization},
  vol.~63, no.~6, pp. 883--898, 2014.

\bibitem{wang2019global}
Y.~Wang, W.~Yin, and J.~Zeng, ``Global convergence of {ADMM} in nonconvex
  nonsmooth optimization,'' \emph{Journal of Scientific Computing}, vol.~78,
  no.~1, pp. 29--63, 2019.

\bibitem{dax2014low}
A.~Dax, ``Low-rank positive approximants of symmetric matrices,''
  \emph{Advances in Linear Algebra \& Matrix Theory}, vol.~4, no.~3, pp.
  172--185, 2014.

\bibitem{halko2011finding}
N.~Halko, P.-G. Martinsson, and J.~A. Tropp, ``Finding structure with
  randomness: Probabilistic algorithms for constructing approximate matrix
  decompositions,'' \emph{SIAM Review}, vol.~53, no.~2, pp. 217--288, 2011.

\bibitem{li2015global}
G.~Li and T.~K. Pong, ``Global convergence of splitting methods for nonconvex
  composite optimization,'' \emph{SIAM Journal on Optimization}, vol.~25,
  no.~4, pp. 2434--2460, 2015.

\bibitem{rockafellar2009variational}
R.~T. Rockafellar and R.~J.-B. Wets, \emph{Variational analysis}.\hskip 1em
  plus 0.5em minus 0.4em\relax Springer Science \& Business Media, 2009, vol.
  317.

\bibitem{stoica1989music}
P.~Stoica and A.~Nehorai, ``{MUSIC}, maximum likelihood, and {Cramer-Rao}
  bound,'' \emph{IEEE Transactions on Acoustics, Speech, and Signal
  Processing}, vol.~37, no.~5, pp. 720--741, 1989.

\bibitem{mardia1999bias}
K.~Mardia, H.~Southworth, and C.~Taylor, ``On bias in maximum likelihood
  estimators,'' \emph{Journal of Statistical Planning and Inference}, vol.~76,
  no. 1-2, pp. 31--39, 1999.

\bibitem{cai2018exploiting}
J.-F. Cai and K.~Wei, ``Exploiting the structure effectively and efficiently in
  low-rank matrix recovery,'' in \emph{Handbook of Numerical Analysis}.\hskip
  1em plus 0.5em minus 0.4em\relax Elsevier, 2018, vol.~19, pp. 21--51.

\end{thebibliography}
\end{document}